\newif\ifFull
\newif\ifCameraReady
\newif\ifShowLineNum
\newcommand{\full}[2]{\ifFull #1\fi \ifCameraReady #2\fi\xspace}
\title{A Parallel Batch-Dynamic Data Structure for the Closest Pair Problem} 
\titlerunning{A Parallel Batch-Dynamic Data Structure for the Closest Pair Problem} 
\author{Yiqiu Wang}{Massachusetts Institute of Technology, United States}{yiqiuw@mit.edu}{}{}
\author{Shangdi Yu}{Massachusetts Institute of Technology, United States}{shangdiy@mit.edu}{}{}
\author{Yan Gu}{University of California, Riverside, United States}{ygu@cs.ucr.edu}{}{}
\author{Julian Shun}{Massachusetts Institute of Technology, United States}{jshun@mit.edu}{}{}
\authorrunning{Y. Wang, S. Yu, Y. Gu, and J. Shun} 
\keywords{Closest Pair, Parallel Algorithms, Dynamic Algorithms, Experimental Algorithms} 
\newcommand{\defn}[1]{\emph{\textbf{#1}}}
\newcommand{\myparagraph}[1]{\medskip\noindent {\bf #1.}}
\newcommand{\whp}{\textit{whp}}
\newcommand{\kdt}{$k$\text{d-tree}\ }
\newcommand{\heapify}{\textsc{Heapify} }
\crefname{algorithm}{Algorithm}{Algorithms}
\newcommand{\down}{\mathit{down}}
\newcommand{\up}{\mathit{up}}
\newcommand{\hide}[1]{}
\newcommand{\s}[1]{S_{#1}}
\newcommand{\sDict}[1]{\s{#1}^\textsf{dict}}
\newcommand{\sprime}[1]{S'_{#1}}
\newcommand{\sprimeDict}[1]{S'_{#1}{}^\textsf{dict}}
\begin{document}

\maketitle

\begin{abstract}
We propose a theoretically-efficient and practical parallel batch-dynamic data structure for the closest pair problem. Our solution is based on a serial dynamic closest pair data structure by Golin et al., and supports batches of insertions and deletions in parallel. For a data set of size $n$, our data structure supports a batch of insertions or deletions of size $m$ in $O(m(1+\log ((n+m)/m)))$ expected work and $O(\log (n+m)\log^*(n+m))$ depth with high probability, and takes linear space. 
The key techniques for achieving these bounds are a new work-efficient parallel batch-dynamic binary heap, and careful management of the computation across sets of points to minimize work and depth.

We provide an optimized multicore implementation of our data structure using dynamic hash tables, parallel heaps, and dynamic $k$-d trees. Our experiments on a variety of synthetic and real-world data sets show that it achieves a parallel speedup of up to 38.57x (15.10x on average) on 48 cores with hyper-threading.
In addition, we also implement and compare four parallel algorithms for static closest pair problem, for which we are not aware of any existing practical implementations. 
On 48 cores with hyper-threading, the static algorithms achieve up to 51.45x (29.42x on average) speedup, and Rabin's algorithm performs the best on average.
Comparing our dynamic algorithm to the fastest static algorithm, we find that it is advantageous to use the dynamic algorithm for batch sizes of up to 20\% of the data set.
As far as we know, our work is the first to experimentally evaluate parallel closest pair algorithms, in both the static and the dynamic settings.

\end{abstract}

\section{Introduction}

The closest pair problem is a fundamental computational geometry problem with applications in robot motion planning~\cite{Laskar2012,Arimoto1988}, computational biology~\cite{Rajasekaran2014}, collision detection, hierarchical clustering, and greedy matching~\cite{eppstein2000fast}. 
In many cases, the data involved can quickly change over time. In the case that a subset of the data gets updated, a dynamic algorithm can be much faster than a static algorithm that recomputes the result from scratch.

We consider a metric space $(S, d)$ where $S$ contains $n$ points in $\mathbb{R}^k$, and $d$ is the $L_t$-metric where $1\leq t\leq \infty$. 
The \defn{static closest pair} problem computes and returns
the \defn{closest pair distance}
$\delta(S) = \min\{d(p,q) \ |\ p,q\in S, p\neq q\}$, and the closest pair $(p,q)$.
The \defn{dynamic closest pair} problem computes the closest pair of $S$, and also maintains the closest pair upon insertions and deletions of points. A \defn{parallel batch-dynamic} data structure processes batches of insertions and deletions of points of size $m$ in parallel. 
In this paper, we propose a new parallel batch-dynamic data structure for closest pair on $(S, d)$.

There is a rich literature on sequential dynamic closest pair algorithms~\cite{overmars1981dynamization,supowit1990new,schwarz1991n,smid1990maintaining,smid1991maintaining,schwarz1994optimal,callahan1995algorithms,kapoor1996new,bespamyatnikh1998optimal,golin1998randomized,agarwal2008kinetic}. More details on this related work is provided in~\cite{smid2000handbook} and \full{Appendix~\ref{sec:related}}{the full version of this paper}.
However, as far as we know, none of the existing dynamic algorithms have been implemented and none of them are parallel.
The main contribution of our paper is the design of a theoretically-efficient and practical parallel batch-dynamic data structure for dynamic closest pair, along with a comprehensive experimental study showing that it performs well in practice. 
Our solution is inspired by the sequential solution of Golin et al.~\cite{golin1998randomized}, which takes $O(n)$ space to maintain $O(n)$ points and supports $O(\log n)$ time updates, and is the fastest existing sequential algorithm for the $L_t$-metric.
Our parallel solution takes a batch update of size $m$ and maintains the closest pair in $O(m(1+\log ((n+m)/m)))$ expected work and $O(\log (n+m)\log^*(n+m))$ depth (parallel time) with high probability (\whp).\footnote{Holds with
probability at least $1-1/n^c$ for an input of size $n$ and some constant $c>0$.} Compared to the sequential algorithm of Golin et al., our algorithm is work-efficient (i.e., matches the work of the sequential algorithm) for single updates, and has better depth for multiple updates since we process a batch of updates in parallel.
Our data structure is based on efficiently maintaining a sparse partition of the points (a data structure used by Golin et al.~\cite{golin1998randomized}) in parallel. This requires carefully organizing the computation to minimize the work and depth, as well as using a new parallel batch-dynamic binary heap that we design. 
This is the first parallel batch-dynamic binary heap in the literature, and may be of independent interest. 

We implement our data structure with optimizations to improve performance. In particular, we combine the multiple heaps needed in our theoretically-efficient algorithm into a single heap, which reduces overheads. We also implement a parallel batch-dynamic \kdt to speed up neighborhood queries. 
We evaluate our algorithm on both real-world and synthetic data sets.
On 48 cores with two-way hyper-threading, we achieve self-relative parallel speedups of up to 38.57x across various batch sizes.
Our algorithm achieves throughputs of up to
$1.35\times 10^7$ and $1.06\times 10^7$ updates per second for insertions and deletions, respectively.

In addition, we implement and evaluate four parallel algorithms for the static closest pair problem. There has been significant work on sequential~\cite{shamos1975closest,rabin1976probabilistic,bentley1976divide,fortune1979note,bentley1980multidimensional,hinrichs1988plane,golin1992simple,khuller1995simple,dietzfelbinger1997reliable,chan1999geometric,banyassady2017simple} and parallel~\cite{atallah1986efficient,mackenzie1998ultrafast,lenhof1995sequential,blelloch2010parallel,BGSS16} static algorithms for the closest pair (more details can be found in~\cite{smid2000handbook} and 
\full{Appendix~\ref{sec:related}}{the full version of the paper}).
As far as we know, none
of the existing parallel algorithms have been evaluated and compared empirically.
We implement a divide-and-conquer algorithm~\cite{blelloch2010parallel}, a variant of Rabin's randomized algorithm~\cite{rabin1976probabilistic},
our parallelization of the sequential sieve algorithm~\cite{khuller1995simple}, and a randomized incremental algorithm~\cite{BGSS16}.
On 48 cores with two-way hyper-threading, our algorithms achieve self-relative parallel speedups of up to 51.45x.\footnote{With hyper-threading, the parallel speedup can be more than the total core count.} 
Our evaluation of the static algorithms shows that Rabin's algorithm is on average 7.63x faster than the rest of the static algorithms.
Finally, we compare our parallel batch-dynamic algorithm with the static algorithms and find that it can be advantageous to use the batch-dynamic algorithm for batches containing up to 20\% of the data set.
Our source code is publicly available at {\url{https://github.com/wangyiqiu/closest-pair}}.


\section{Review of the Sparse Partition Data Structure}\label{sec:golin}

We give an overview of the sequential dynamic closest pair data structure proposed by Golin et al.~\cite{golin1998randomized}, which is based on the serial static closest pair algorithm by Khuller and Matias~\cite{khuller1995simple}. More details are presented in \full{Appendix~\ref{appendix:golin}.}{the full version of the paper.}
Our new parallel algorithm also uses this data structure, which is referred to as the \emph{sparse partition} of an input set.

\subsection{Sparse Partition}\label{sec:golin:sp}

For a set $S$ with $n$ points, a \defn{sparse partition}~\cite{golin1998randomized} is defined as a sequence of 5-tuples $(S_i,S_i',p_i,q_i,d_i)$ with size $L$ ($1\leq i\leq L$), such that
\textbf{(1)} $S_1=S$;
\textbf{(2)} $S_i' \subseteq S_i \subseteq S$;
\textbf{(3)} If $|S_i|>1$, $p_i$ is drawn uniformly at random from $S_i$, and we compute the distance $d_i=d(p_i,q_i)$, to $p_i$'s closest point $q_i$ in $S_i$;
\textbf{(4)} For all $x\in S_i$:
\textbf{a)} if the closest point of $x$ in $S_i$ (denoted as $d(x,S_i)$) is larger than $d_i/3$, then $x\in S_i'$; \textbf{b)} if $d(x,S_i)\leq d_i/6k$ then $x\notin S_i'$;
and \textbf{c)} if $x\in S_{i+1}$, there is a point $y\in S_i$ such that $d(x,y)\leq d_i/3$ and $y\in S_{i+1}$; 
\textbf{(5)} $S_{i+1}=S_i\setminus S_i'$.

The sparse partition is constructed using these rules until $S_{L+1}=\emptyset$.
It contains $O(\log n)$ levels in expectation, as $|S_i|$ decreases geometrically.  The expected sum of all $|S_i|$ is linear~\cite{golin1998randomized}.
We call $p_i$ the \defn{pivot} for partition $i$.
At a high level, $S_i'$ contains points that are far enough from each other, and the threshold $d_i$ that defines whether points are "far enough" decreases for increasing $i$. In particular, for any $1\leq i<L$, $d_{i+1}\leq d_i/3$ as shown in Golin et al.~\cite{golin1998randomized}.
Hence, the closest pair will likely show up in deeper levels that do not contain many points.
Based on the definition, each $S_i'$ is non-empty, and
$\{S_1', \ldots, S_L'\}$ is a partition of $S$. 

\subsection{A Grid-Based Implementation of Sparse Partition}
We now describe Golin et al.'s implementation of the sparse partition.
There are $L$ levels of the sparse partition, and we refer to each as \defn{level $i$} for $1\leq i\leq L$. 
We maintain each level using a grid data structure, similar to many closest pair algorithms (e.g.,~\cite{rabin1976probabilistic,golin1992simple, khuller1995simple, golin1998randomized}).

To represent $S_i$, we place the points into a grid $G_i$ with equally-sized axis-aligned grid boxes with side length $d_i/6k$, where  $k$ is the dimension, and $d_i$ is the closest pair distance of the randomly chosen pivot $p_i$. This can be done using hashing.
Denote the \defn{neighborhood} of a point $p$ in $G_i$ relative to $S$ by $N_i(p,S)$, which refers to the set of points in $S\setminus \{p\}$ contained in the collection of $3^k-1$ boxes bordering the box containing $p$, as well as $p$'s box. We say that point $p$ is \defn{sparse} in $G_i$ relative to $S$ if $N_i(p,S)=\emptyset$.
We use this notion of sparsity to compute $S_i'= \{p\in S_i : p\text{ is sparse in }G_i\text{ relative to }S_i\}$, which satisfies definition (4) of the sparse partition. The points in $S_i'$ are stored in a separate grid. 

\begin{figure}[t]
  \includegraphics[trim=20 10 0 0,clip,width=1.02\textwidth]{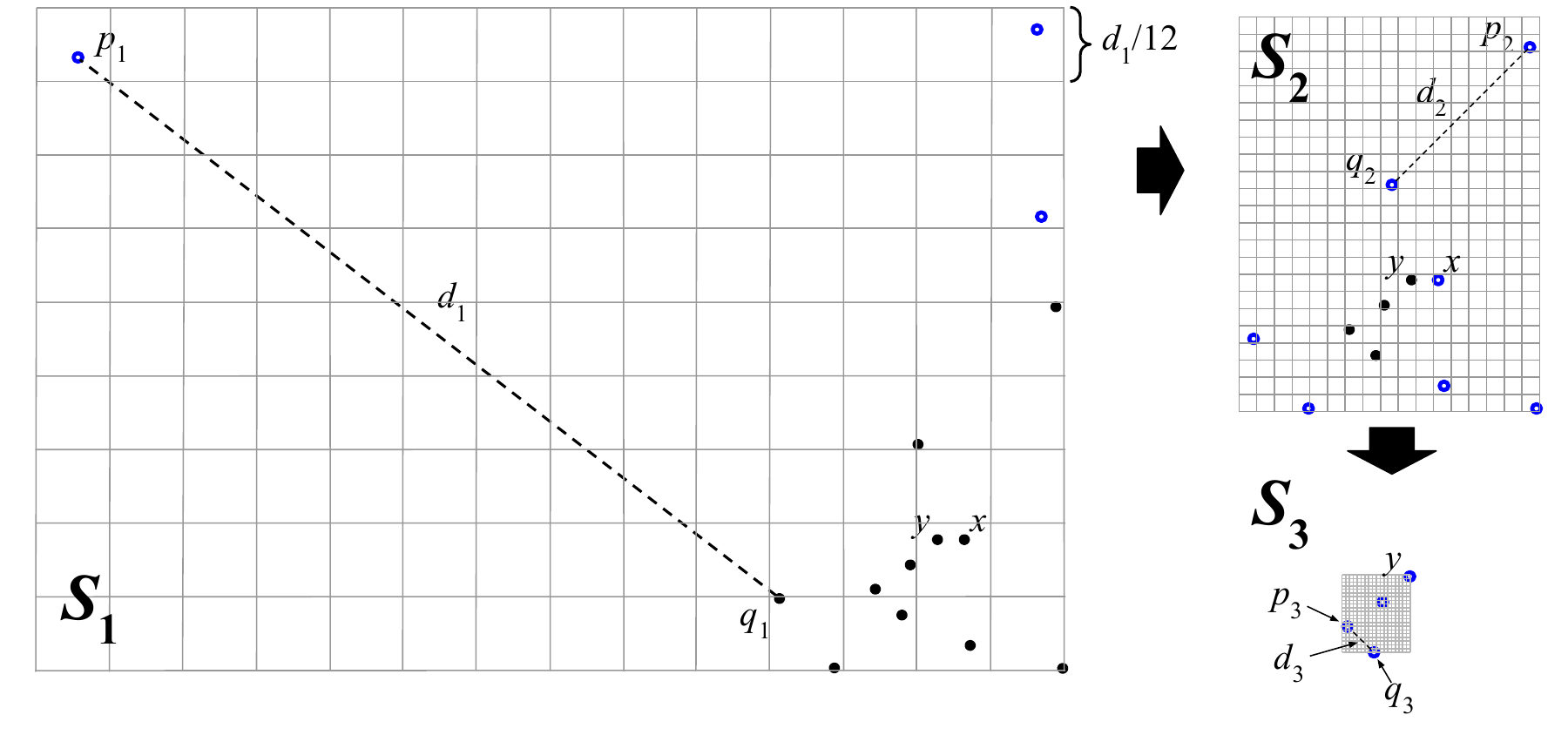}
  \caption{This figure contains an example of 14 points in $\mathbb{R}^2$, for which a grid-based sparse partition $(S_i,S_i',p_i,q_i,d_i)$ for $1\leq i\leq 3$ is constructed.
  On each level, we use a dotted line to indicate $d_i$, the Euclidean distance between the pivot $p_i$ and its closest neighbor $q_i$, and we set the grid size to be $d_i/6k=d_i/12$.
  We denote non-sparse points as solid black circles and sparse points as hollow blue circles. The $S_i'$ sets are represented implicitly by the set of hollow blue circles in each $S_i$.
  We denote the true closest pair by letters $x$ and $y$.
  }
  \label{fig:example}
\vspace{1em}

\end{figure}

An example of the grid structure in two dimensions is shown in Figure~\ref{fig:example}. We illustrate the grid $G_i$ for the $S_i$ of each level, as well as the pivot $p_i$ and its closest neighbor $q_i$. The grid size is set to $d_i/6k=d(p_i,q_i)/12$. The sparse points, represented by the hollow blue circles, have empty neighborhoods, and do not have another point within a distance of $d_i/3$.
The solid black circles, representing the non-sparse points, are copied to the grid $G_{i+1}$ for $S_{i+1}$.
In $S_3$, all points are sparse.

To construct a sparse partition, the sequential algorithm proceeds in rounds. On round $i$, the $i$'th level is constructed. 
We start with $i=1$ where $S_1 = S$, and iteratively determine the side length of grid $G_i$ based on a random pivot, and place $S_i$ into $G_i$. 
Then, we compute $S_i'$ based on the definition of sparsity above, and set $S_{i+1}=S_i\setminus S_i'$. 
The algorithm proceeds until $S_i=S_i'$ (i.e., $S_{i+1}=\emptyset$).
The expected work for construction is $O(n)$ since $|S_i|$ decreases geometrically~\cite{golin1998randomized}.
A single insertion of point $q$ starts from $S_1$, and proceeds level by level. When $q$ is non-sparse in $S_i$, it will be added to $S_{i+1}$, and can promote points from $S_i'$ to $S_{i+1}'$ if $q$ falls within their neighborhood. The insertion of $q$ will stop if it becomes sparse at some level, at which point the insertion algorithm finishes. A deletion works in the opposite direction, starting from the last level where the deleted point exists, and working its way back to level $1$. Each insertion or deletion takes $O(\log n)$ expected work.

\subsection{Obtaining the Closest Pair}

As observed by both Khuller and Matias~\cite{khuller1995simple} and Golin et al.~\cite{golin1998randomized}, although the grid data structure rejects far pairs, and becomes more fine-grained with a larger $i$, the grid at the last ($L$'th) level does not necessarily contain the closest pair. For example, as illustrated in Figure~\ref{fig:example}, $S_3$ for the last level does not contain the closest pair $(x,y)$, as $x$ is sparse on level 2 and not included in $S_3$. Therefore, we need to check more than just the last level.

The \defn{restricted distance} $d^*_i(p)$~\cite{golin1998randomized} is the closest pair distance from point $p$ to any point in $\bigcup_{0\leq j \leq k} S_{i-j}'$, and defined as $d^*_i(p):= \min\{d_i, d(p, S_{i-k}'\cup S_{i-k+1}' \cup \ldots \cup S_i')\}$, where $p\in S_i'$.
Golin et al.\ show that $\delta(S) = \min_{L-k\leq i\leq L} \min_{p\in S_i'} d_i^*(p)$, meaning that the closest pair can be found by taking the minimum among the restricted distance pairs for all points in the last $k+1$ levels of $S_i'$. 
The sequential algorithm~\cite{golin1998randomized} computes the restricted distance for each point in $S_i'$, and stores it in a min-heap $H_i$ for level $i$. 
To obtain the closest pair, we read the minimum of $H_i$ for $L-k\leq i \leq L$, and then take the overall minimum. This takes $O(1)$ work.
In \full{Appendix~\ref{appendix:golin},}{the full version of the paper,} we provide more background on the restricted distance.

\section{Computational Model}

We use the classic
\defn{work-depth model} for analyzing parallel
algorithms~\cite{CLRS,JaJa92}.
The \defn{work} $W$ of an algorithm is the number of instructions in
the computation, and the \defn{depth} $D$ is the length of the longest sequential
dependence.  Using Brent's scheduling theorem~\cite{Brent1974},  we can
execute a parallel computation in $W/p+D$ running time using $p$ processors.
A parallel algorithm is \defn{work-efficient} if its work asymptotically matches the work of the best sequential algorithm for the same problem.
We assume that arbitrary concurrent writes are supported in $O(1)$ work and depth.
Our pseudocode uses the \textbf{fork} and \textbf{join} keywords for fork-join parallelism~\cite{CLRS}. A \textbf{fork} creates a task that can be executed in parallel with the current task, and a \textbf{join} waits for all tasks forked by the current task to finish.
In \full{Appendix~\ref{appendix:prelims},}{the full version of the paper,} we present details on the parallel primitives that we use.

For our batch-dynamic data structure, we assume that the updates are independent of the random choices made in our data structure.

\section{Parallel Batch-Dynamic Data Structure}\label{sec:par-alg}

In this section, we introduce our parallel batch-dynamic algorithm, including the construction of the sparse partition (defined in Section~\ref{sec:golin}) and how to handle batch updates. 

\begin{algorithm}[t]
  \caption{Construction}\label{alg:build-short}
  \SetAlgoLined
  \SetKwInOut{Input}{Input}
  \SetKwInOut{Output}{Output}
  \Input{Point set $S$.}
  \Output{A sparse partition and its associated heaps.} 

  \SetKwFunction{algo}{algo}\SetKwFunction{proc}{proc}
  \SetKwProg{myalg}{Algorithm}{}{}
  \SetKwProg{myproc}{Procedure}{}{}

  \myalg{\upshape\textsc{Main}()}{
    \textsc{Build}($S$, $1$);\label{codeline:build-short:init} \ \tcc{Initially, $\s{i}:=S$.}
  }{}
  \myproc{\upshape\textsc{Build}($\s{i}$, $i$)}{
    Choose a random point $p_i\in \s{i}$. Calculate $d_i:=d(p_i, \s{i})$, set the grid side length to $d_i/6k$, and store $p_i$'s nearest neighbor as $q_i$.\label{codeline:build-short:pivot}

    Create a parallel dictionary $\sDict{i}$ to store points in $\s{i}$ keyed by box ID.
    In parallel, compute the box ID of each point in $\s{i}$ based on the grid size, and store the point in the box keyed by the box ID in $\sDict{i}$. 
    \label{codeline:build-short:si}

    Create a parallel dictionary $\sprimeDict{i}$ to store points in $\sprime{i}$ keyed by box ID.
    In parallel, determine if each point $x$ in $\s{i}$ is sparse by checking $N_i(q,\s{i})$ (using $\sDict{i}$). Store the sparse points in $\sprime{i}$ and $\sprimeDict{i}$, and the remaining points in a new point set represented by an array $\s{i+1}$. 
    \label{codeline:build-short:siprime}

    In parallel for each point $x\in \sprime{i}$, compute $d_i^*(x)$ by checking its neighborhoods (using $\sprimeDict{j}$) in $\sprime{j}$ where $i-k\leq j \leq i$ .\label{codeline:build-short:sparsity}

    \Spawn Create a heap for $ \{d_i^*(x): x \in \sprime{i}\}$.\label{codeline:build-short:spawn}

    \Indp
    \textsc{Build}($\s{i+1}$, $i+1$) if $\s{i+1}$ is not empty.\label{codeline:build-short:recur}

    \Indm \Sync \label{codeline:build-short:sync}
  }
\end{algorithm}

As shown in~\cref{alg:build-short},
we start with an initial point set $S$, on which we construct a grid structure recursively level by level, until all points become sparse.
Starting with $\s{i}=S$, the algorithm works on point set $\s{i}$ for level $i$. We first pick a pivot point $p_i\in \s{i}$ and obtain its closest pair by computing distances to all other points in parallel, followed by computing the minimum to determine the side length of the grid boxes, which we use a parallel dictionary~\cite{Gil91a} to store (Lines~\ref{codeline:build-short:pivot}--\ref{codeline:build-short:si}). We check the sparsity of each point $x$ in parallel by looking up neighboring boxes using the dictionary, and store the sparse points $\sprime{i}$ in a new parallel dictionary and the remaining points in a new point set array $\s{i+1}$ (Line~\ref{codeline:build-short:siprime}).
Then, we compute the restricted distances of all points in $\sprime{i}$ in parallel (Line~\ref{codeline:build-short:sparsity}), and spawn a thread to asynchronously construct the heap $H_i$ to store the restricted distances (Line~\ref{codeline:build-short:spawn}).
We recursively call the construction procedure on $\s{i+1}$ to construct the next level until all points in a level are sparse (Line~\ref{codeline:build-short:recur}).
In \full{Appendix~\ref{appendix:construction},}{the full version of the paper,} we include more details about the algorithm, prove a high probability bound on the number of levels, and explain how to achieve optimal linear work and space. We summarize our bounds in the following theorem.

\begin{theorem}\label{theorem:build}
We can construct a data structure that maintains the closest pair containing $n$ points in $O(n)$ expected work, $O(\log n \log^* n)$ depth \whp, and $O(n)$ expected space.
\end{theorem}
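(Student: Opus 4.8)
The plan is to assemble \cref{theorem:build} by combining the work, depth, and space accountings that have essentially been set up already, so the proof is a matter of stitching the pieces together cleanly. First I would establish the work bound: each call to \textsc{Build}$(S_i, i)$ performs $O(|S_i|)$ expected work outside of the heap construction and recursive call --- pivot selection and the distance computation for $q_i$ via parallel minimum cost $O(|S_i|)$, the parallel-dictionary insertions for the grid cost $O(|S_i|)$ expected work and the sparsity checks cost $O(3^k |S_i|) = O(|S_i|)$, and computing restricted distances $d_i^*(x)$ for $x \in S_i'$ costs $O(|S_i'|) \le O(|S_i|)$ since each such point inspects the $O(3^k)$ boxes in $N_i(x, S_j')$ across the $k+1$ relevant levels. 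The heap spawned on Line 8 has $O(|S_i'|)$ entries and contributes $O(|S_i'|)$ work. Summing over all levels, the total work is $O\left(\sum_i |S_i|\right)$, and since $\sum_i |S_i| = O(n)$ in expectation (a property of the sparse partition established by Golin et al.~\cite{golin1998randomized}, following from the geometric decrease of $|S_i|$), the total expected work is $O(n)$.

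Next I would handle the depth. By \cref{lemma:rounds}, there are $L = O(\log n)$ levels \whp{}, and hence $O(\log n)$ recursive calls to \textsc{Build}. Each individual call to \textsc{Build}, excluding heap construction and the recursive call, has depth $O(\log^* |S_i|) = O(\log^* n)$ \whp{}, dominated by the parallel-dictionary operations; the parallel minimum and the sparsity checks are $O(1)$ depth \whp{}, and computing the restricted distances is $O(1)$ depth. Chaining these $O(\log n)$ calls through the recursion gives $O(\log n \log^* n)$ depth \whp{} for the ``spine'' of the computation. The heap constructions are forked off (Line 8) and run asynchronously in parallel with the rest; each has depth $O(\log |S_i'|) = O(\log n)$, and since they all proceed concurrently after being spawned, their combined contribution to the depth is $O(\log n)$, which is subsumed by the spine bound. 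The final \Sync{} at the bottom of the recursion waits for all of them, so the overall depth is $O(\log n \log^* n)$ \whp{}.

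For the space bound I would note that the grid at level $i$, stored in a parallel dictionary, uses $O(|S_i|)$ space, and the heap $H_i$ is of linear size $O(|S_i'|) = O(|S_i|)$; summing over levels and again invoking $\sum_i |S_i| = O(n)$ in expectation gives $O(n)$ expected space. Combining the three accountings yields the theorem.

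The main obstacle is being careful about where the $O(\log n)$-level bound is used ``in probability'' versus ``in expectation.'' The work and space bounds rely on $\sum_i |S_i| = O(n)$ in expectation, whereas the depth bound relies on $L = O(\log n)$ \whp{} from \cref{lemma:rounds}; one must not conflate these or accidentally claim a high-probability work bound that would require a separate tail argument on $\sum_i |S_i|$. A secondary subtlety is justifying that the asynchronous heap constructions genuinely overlap --- i.e., that forking on Line 8 before the recursive call on Line 9 means all $L$ heaps are building in parallel rather than serializing --- so that they contribute only an additive $O(\log n)$ rather than $O(\log^2 n)$ to the depth; this follows because \textsc{Build} spawns the heap and then immediately recurses without joining until the very end, but it is worth stating explicitly. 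Everything else is routine bookkeeping over the primitives whose bounds are quoted in \cref{sec:prelims}.
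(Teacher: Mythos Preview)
Your proposal is correct and follows essentially the same approach as the paper: per-level $O(|S_i|)$ work summed via $\sum_i |S_i|=O(n)$ in expectation, $O(\log n)$ levels \whp{} from \cref{lemma:rounds} times $O(\log^*n)$ per level for the spine depth, and the asynchronous heap constructions contributing only an additive $O(\log n)$. Your explicit separation of the expectation-based work/space bounds from the \whp{} depth bound, and your justification that the forked heap builds overlap rather than serialize, are in fact more carefully articulated than the paper's own terse treatment of those points.
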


\begin{figure}[]
  \includegraphics[trim=0 0 0 0,clip,width=\textwidth]{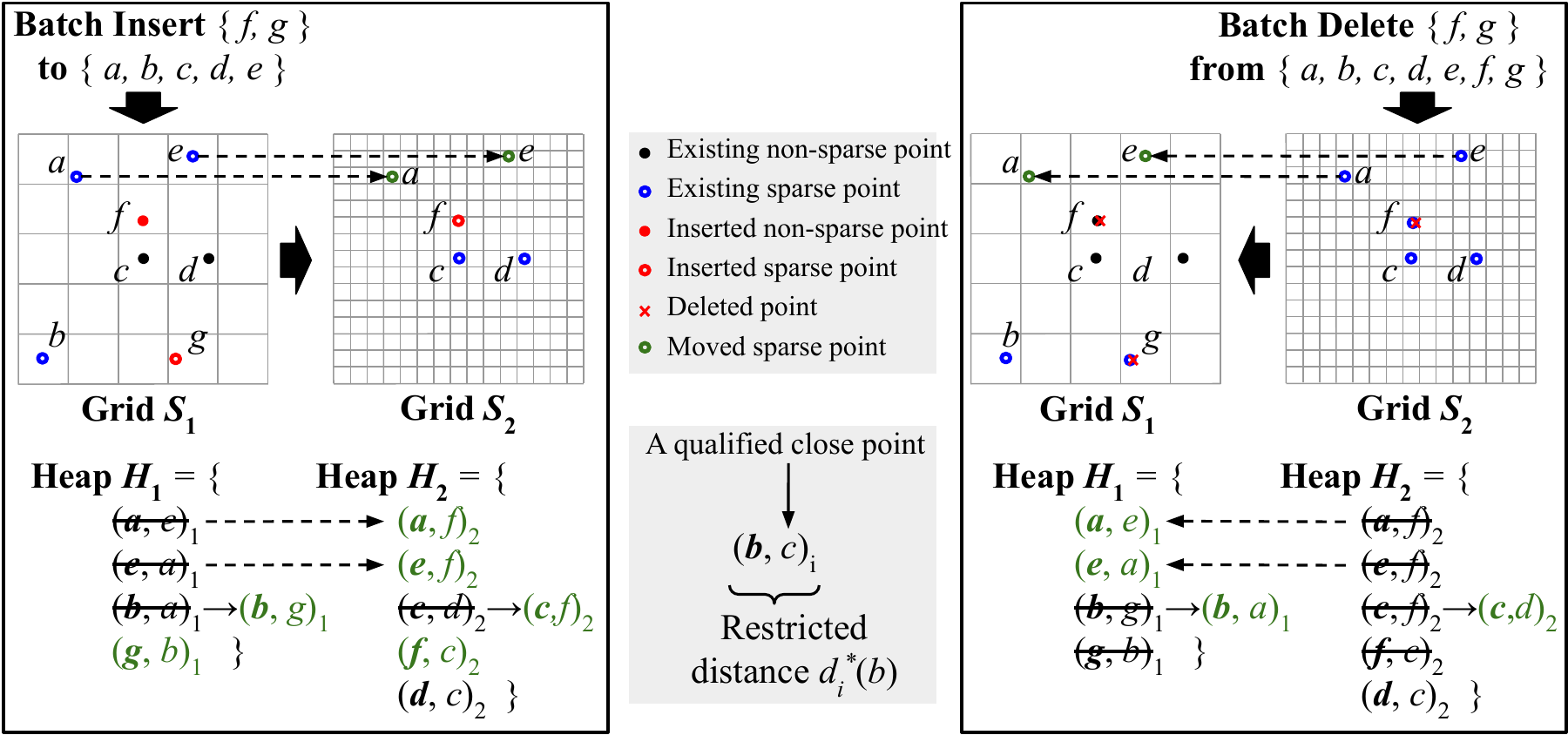}
  \caption{The figure illustrates the interaction between our parallel batch-dynamic insertion (left) and deletion (right) algorithms with the data structure. For ease of illustration, we do not show all the points in the data set. We show the data structure with two levels, and explicitly show $S_i$ and $H_i$ for each level. The grid structure in the upper half of the figures determines the sparsity of points. We represent different types of points as defined in the middle legend. In the lower half of the figures, we show the heaps with the restricted distances that they store. We show the pair defining the restricted distance of a sparse point $x$ on level $i$ as $(\textit{\textbf{x}},y)$ if another point $y$ is the closest sparse point to $x$ in levels $i-j$ where $0\leq j\leq 2$.
  For both insertion and deletion, we annotate the direction of the update between grids using large bold arrows (i.e., insertion starts with $S_1$ and deletion starts with $S_2$). We indicate the movement of points and heap entries using dotted arrows.
  }
  \label{fig:overview}
\end{figure}
\begin{algorithm}[]
\SetAlgoLined
  \SetKwInOut{Input}{Input}\SetKwInOut{Output}{Output}
  \Input{$(S_i, S_i', p_i, q_i, d_i)$ and $H_i$ for $1\leq i \leq L$; a batch $Q$ to be inserted.}

 \SetKwFunction{algo}{algo}\SetKwFunction{proc}{proc}
 \SetKwProg{myalg}{Algorithm}{}{}
 \SetKwProg{myproc}{Procedure}{}{}

 \myalg{\upshape\textsc{Main}()}{
   \textsc{Insert}($Q$, $\emptyset$, $1$)\;\label{line:grid-insert-short:call}
 }{}
   
 \myproc{\upshape\textsc{Insert}($Q_{i}$, $\down_{i}$, $i$)}{\label{line:grid-insert-short:insert-proc}
   $(Q_{i+1}$, $\down_{i+1}) :=$ \textsc{GridInsert}$(Q_{i}$, $\down_{i}$, $i)$\;\label{line:grid-insert-short:grid}
   \textsc{HeapUpdate}$(i)$\;\label{line:grid-insert-short:heap}
   \lIf{$(Q_{i+1}\cup \down_{i+1}) \neq \emptyset$}{
     \textsc{Insert}($Q_{i+1}$, $\down_{i+1}$, $i+1$)
   }\label{line:grid-insert-short:recur-insert}
 }
 
 \myproc{\upshape\textsc{GridInsert}$(Q_{i}$, $\down_{i}$, $i)$}{
   Determine if $p_i$, $q_i$, and $d_i$ should change when inserting $Q_i$ and $down_i$, which happens with probability $(|Q_i|+|down_i|)/(|Q_i|+|down_i|+|S_i|)$, or if a new point is closer to $p_i$ than the previously closest point $q_i$. \label{line:grid-insert-short:pivot}

    If $p_i$, $q_i$, or $d_i$ change on Line~\ref{line:grid-insert-short:pivot}, or if $i>L$, call \textsc{Build}$(Q_{i}\cup \down_{i}\cup S_i, i)$ to rebuild subsequent levels, and terminate the batch insertion.\label{line:grid-insert-short:rebuild}

   Insert each point in $\down_{i}$ and $Q_{i}$ into the dictionary of $S_i$ in parallel.\label{line:grid-insert-short:insert}

   For each point $x$ in $Q_{i}$ in parallel, check if it is sparse in $S_i$. If so, insert $x$ into the dictionary of $S_i'$, and otherwise, insert $x$ into $Q_{i+1}$.\label{line:grid-insert-short:check1}

   For each point $x$ in $\down_{i}$ in parallel, check if it is sparse in $S_i$. If so, insert $x$ into the dictionary of $S_i'$, and otherwise, insert $x$ into $\down_{i+1}$.\label{line:grid-insert-short:check2}

   In parallel, for each point $x$ in $Q_{i}$, and for each point $r$ in the neighborhood $N(x, S_i')$, delete $r$ from $S_i'$, and insert $r$ into $\down_{i+1}$.\label{line:grid-insert-short:check3}

   \Return $(Q_{i+1},\down_{i+1})$\;
 }

 \caption{Batch Insert}\label{alg:grid-insert-short}
\end{algorithm}

Next, we present our parallel  algorithm that processes a batch $Q$ of $m$ insertions or deletions. For $m\ge n$, we can simply rebuild the data structure on all of the points using \cref{theorem:build} to obtain the desired bounds. We now describe the case for $m < n$. 
For batch updates, there are two main tasks: updating the grid and updating the heap. We first describe updating the grid.
We let $Q_i$ be the subset of points in $Q$ that are inserted at level $i$, and $\down_{i}$ be the set of points that move from level $i-1$ to level $i$ due to the insertion of $Q_i$.
We start with a simple example of an insertion in \cref{fig:overview}~(left), which originally contains five points $\{a,b,c,d,e\}$. For simplicity, we assume that the pivot remains unchanged and also omits $S_i'$.
$Q_1=\{f,g\}$ is the set of points inserted into the grid at level $1$. We first update $S_1$ to include $f$ and $g$, and then update $S_2$ to include $Q_2=\{f\}$ but not $g$, since $g$ is already sparse in $S_1$.
In the example, the insertion of $Q_1$ triggers further point movements of $\{a,e\}$ from level $1$ to level $2$, as the sparse points $a$ and $e$ in $S_1$ become non-sparse due to the insertion of $f$.

We explain the insertion algorithm in detail, and defer most details of the deletion algorithm to \full{Appendix~\ref{appendix:deletion}}{the full version of the paper}.
As shown in Algorithm~\ref{alg:grid-insert-short}, the update proceeds recursively level by level (Lines~\ref{line:grid-insert-short:insert-proc}--\ref{line:grid-insert-short:recur-insert}).
Each call to the procedure \textsc{Insert}$(Q_{i}, \down_{i}, i)$ updates $(S_i,S_i',p_i,q_i,d_i)$ and $H_i$.
Initially, $Q_1=Q$ and $\down_1$ is empty, as shown on Line~\ref{line:grid-insert-short:call}.
For each level $i$, we update the pivot and rebuild the level with probability $(|Q_i|+|down_i|)/(|Q_i|+|down_i|+|S_i|)$ to ensure that the pivot is still  selected uniformly at random among the points in $S_i$, and we also update the pivot if a new point is closer to $p_i$ than the previous closest point $q_i$ (Lines~\ref{line:grid-insert-short:pivot}--\ref{line:grid-insert-short:rebuild}). 
Otherwise, we insert the points in both $\down_{i}$ and $Q_{i}$ into the dictionary representing $S_i$. We then check if the points that we inserted are sparse, and insert the sparse ones into the dictionary representing $S_i'$. The points that are not sparse will be added to sets $\down_{i+1}$ and $Q_{i+1}$ and passed on to the next level (Lines~\ref{line:grid-insert-short:insert}--\ref{line:grid-insert-short:check2}).
We then determine additional elements of $\down_{i+1}$ by including the points in the neighborhood of $Q_{i}$ 
in $S_i'$ (Line~\ref{line:grid-insert-short:check3}). 
In general, $\down_{i+1}$ is computed by $\down_{i+1} = \{x \ |\ x\in N_{i}(q,S_{i}'\cup \down_{i})$ for some $q\in Q_{i} \}$.
If $Q_{i+1}$ and $\down_{i+1}$ are empty, nothing further needs to be done for subsequent levels, and the tuples $(S_l, S_l', p_l, q_l, d_l)$ for $i< l\leq L$ remain unchanged. 

We now argue that the algorithm is correct.
Consider a round $i$ that inserts a non-empty $Q_{i}\cup \down_{i}$. After the insertion, the pivot is still chosen uniformly at random, since on Line~\ref{line:grid-insert-short:pivot}, we choose $p_i$ such that each point in $S_i \cup Q_{i} \cup \down_{i}$ has the same probability of being chosen.
All sparse points in $Q_{i}$ and $\down_{i}$ inserted into $S_i$ are included in $S_i'$ (Lines~\ref{line:grid-insert-short:check1}--\ref{line:grid-insert-short:check2}). Line~\ref{line:grid-insert-short:check3} additionally ensures that all points that were originally sparse in $S_i'$, but are no longer sparse after the insertion, are removed from $S_i'$.
Given that the non-sparse points in the original $S_i$ will not become sparse due to the batch insertion, $S_i'$ must contain exactly all of the sparse points of the updated $S_i$.

In our algorithm, each point can be moved across multiple levels as a result of a batch insertion.
\full{In Appendix~\ref{appendix:insertion},}{In the full version of the paper,}
we prove the following lemmas, which are key to maintaining work-efficiency.

\begin{restatable}{lemma}{packing}\label{lemma:packing}
  $|\bigcup_{1\leq i\leq L} \down_i| \leq m\cdot 3^k = O(m)$
\end{restatable}

\begin{restatable}{lemma}{sumq}\label{lemma:sum-q}
  $\sum_{1\leq i\leq L}E[|Q_i|] = O(m)$
\end{restatable}

Deletions work similarly in the reverse direction as shown in Figure~\ref{fig:overview}~(right), and we provide more details in \full{\cref{appendix:deletion}.}{the full version of the paper.} We obtain Lemmas~\ref{lemma:grid-update} and~\ref{lemma:grid-update-delete}.

\begin{lemma}\label{lemma:grid-update} 
We can maintain a sparse partition for a batch of $m$ insertions in $O(m)$ amortized work in expectation and $O(\log (n+m) \log^*(n+m))$ depth \whp. 
\end{lemma} 

\begin{proof}

The expected cost of rebuilding on Line~\ref{line:grid-insert-short:rebuild} summed across all rounds is proportional to the batch size. First, we re-select the pivot and rebuild with probability $(|Q_{i}|+|\down_{i}|)/(|S_i|+|Q_{i}|+|\down_{i}|)$.
When the pivot $p_i$ is unchanged, it may update its closest point to $q_i^*$ from $Q_{i}\cup \down_{i}$.
It is easy to show that $q_i^*$ can be the nearest neighbor of at most $3^k-1$ points in $S_i$. Hence, considering all candidates $Q_{i}\cup \down_{i}$, it follows that they can be the nearest neighbors to $O(3^k\cdot (|Q_{i}|+|\down_{i}|))$ points in $S_i$. Therefore, the pivot distance changes with probability at most $3^k\cdot (|Q_{i}|+|\down_{i}|)/|S_i|$, in which case we rebuild the sparse partition. The expected work of rebuilding at level $i$ is
$O((|S_i|+|Q_{i}|+|\down_{i}|)\cdot ((|Q_{i}|+|\down_{i}|)/(|S_i|+|Q_{i}|+|\down_{i}|)+3^k \cdot (|Q_{i}|+|\down_{i}|)/|S_i|))=O(m)$.
As we terminate the insertion algorithm when a rebuild occurs, the rebuild can occur at most once for each batch, which contributes $O(m)$ in expectation to the work and $O(\log (n+m) \log^* (n+m))$ \whp{} to the depth by Theorem~\ref{theorem:build}.

For the rest of the algorithm,
in terms of work, Line~\ref{line:grid-insert-short:insert}--\ref{line:grid-insert-short:check2} does work proportional to $O(\sum_i (|Q_i|+|down_i|))=O(m)$ across all the levels due to Lemmas~\ref{lemma:packing} and~\ref{lemma:sum-q}.
On Line~\ref{line:grid-insert-short:check3}, the number of points in the neighborhood $N_i(x,S_i')$ of each $x$ is upper bounded by $3^k$ since the points in $S_i'$ are sparse, therefore it takes $O(3^k\cdot m)=O(m)$ expected work.
Note that the work is amortized due to resizing the parallel dictionary when necessary.
In terms of depth, looking up and inserting points takes $O(\log^* (n+m))$ depth using the parallel dictionary. Therefore, all operations in Lines~\ref{line:grid-insert-short:insert}--\ref{line:grid-insert-short:check3} takes $O(\log^* (n+m))$ depth, and across all $O(\log (n+m))$ \whp{} rounds, the total depth is $O(\log (n+m)\log^* (n+m))$ \whp{}.
\end{proof}

\begin{lemma}\label{lemma:grid-update-delete}
We can maintain a sparse partition for a batch of $m$ deletions in $O(m)$ amortized work in expectation and $O(\log (n+m) \log^*(n+m))$ depth \whp.
\end{lemma} 

Now we describe the parallel updates of min-heaps $H_i$ associated with each level $i$ of the sparse partition.
Recall that $H_i$ contains the restricted distances $d_i^*(q)$ for $q\in S_i'$. By definition, $d_i^*(q)$ is the closest distance of $q$ to another point in $S_{i-l}'$ where $0\leq l\leq k$ ($k$ is the dimensionality). Therefore, following an update on $S_i'$, we need to update the $d_i^*(q)$'s in $H_{i+l}$ for $0\leq l\leq k$, and $q\in S_{i+l}'$. We use same example in Figure~\ref{fig:overview}~(left), where we denote the restricted distance of point $x$ as $(\textit{\textbf{x}},y)_i = d^*_i(x)= d(x,y)$, where $y\in \bigcup_{0\leq j\leq k} S_{i-j}'$ is another point that defines $x$'s closest distance.
As shown in Figure~\ref{fig:overview} (left), due to the insertion of the sparse point $g$ to $S_1$, entry $(\textit{\textbf{g}},b)_1$ is added to $H_1$. Some entries in $H_1$ are moved due to the point movements, e.g., $(\textit{\textbf{a}},e)_1$ from $H_1$ is moved and updated to $(\textit{\textbf{a}},f)_2$ in $H_2$ because $a$ has moved from $S_1$ to $S_2$, and $f$ is now closer.
Some entries are updated, e.g., $(\textit{\textbf{c}},d)_2$ is updated to $(\textit{\textbf{c}},f)_2$ in $H_2$ since the new point $f$ is closer to $c$ than $d$.

Our algorithm uses a new parallel batch-dynamic binary heap that we introduce in \cref{sec:heap}. For a batch-parallel binary heap of size $n$ and a batch update (a mix of inserts, deletes, and increase/decrease-keys) of size $m$, updating the heap takes $O(m\log ((n+m)/m))$ work and $O(\log (n+m))$ depth, and find-min takes $O(1)$ work. 
A simple implementation of our algorithm 
would execute all min-heap updates for an updated $S_i'$ before processing the updates for $S_j'$ where $j>i$ for insertions and $j<i$ for deletions. This level-by-level dependence leads to $O(\log^2(n+m))$ depth overall for the heap updates. In \full{\cref{appendix:heap-update},}{the full version of the paper,} we present an improved algorithm that breaks the level dependencies and enables updates to be performed to a min-heap as early as possible, leading to $O(\log(n+m))$ depth overall for the heap updates.  Together with Lemma~\ref{lemma:grid-update} and Lemma~\ref{lemma:grid-update-delete}, we obtain the following theorem, whose analysis we present in \full{\cref{appendix:overall-analysis}.}{the full version of the paper.}


\begin{theorem}\label{thm:overall}
Updating our data structure for a batch of $m$ insertions/deletions takes amortized $O(m(1+\log ((n+m)/m)))$ expected work and $O(\log (n+m) \log^*(n+m))$ depth \whp. 

\end{theorem}

Obtaining the closest pair from our data structure takes $O(1)$ work and depth. We simply call find-min on $H_i$ for $L-k\leq i \leq L$, and then take the overall minimum.

\section{Parallel Batch-Dynamic Binary Heap}\label{sec:heap}

One of the key components in parallelizing our closest pair algorithm is a parallel binary heap that supports batch updates (inserts and deletes) and find-min efficiently.
This heap allows us to perform the parallel construction in linear work and perform updates with low depth.
Our data structure may be of independent interest, since to the best of our knowledge, the only existing work on parallelizing a binary heap is on individual inserts or deletes~\cite{pinotti1995parallel}.

A binary heap is a complete binary tree, where each node contains a key that is smaller than or equal to the keys of its children.
Sequentially, the construction of a binary heap takes linear work, and each insert and delete takes $O(\log n)$ work~\cite{CLRS}.
The heap is represented as an  array, and uses relative positions within the array to represent child-parent relationships.
Sequentially, each insertion adds a new node at the end of the heap and runs \textsc{Up-Heap} to propagate the node up to the correct position in the heap. A deletion first swaps the node to delete with the node to the end of the heap, reduces the heap size by one, and then runs \textsc{Up-Heap} followed by \textsc{Down-Heap} (to propagate a node down to its correct position) for the node swapped to the middle of the heap. 

Central to our parallel batch-dynamic binary heap is a new parallel \heapify algorithm, that takes $m$ updates from a valid heap of $n$ elements, and returns another valid \full{heap (the pseudocode can be found in \cref{alg:heap} in \cref{appendix:heap}).}{heap.} It runs in two phases: the first phase works on increase-key updates, and the second phase on decrease-key updates. In both phases, we first use parallel integer sorting~\cite{RR89,vishkin10} to categorize all updates based on the level where the update belongs.
Simply running the \textsc{Up-Heap} and \textsc{Down-Heap} calls for the different updates in parallel does not achieve work-efficiency and low depth, and also leads to potential data races.
Therefore, we \emph{pipeline} each level of the \textsc{Up-Heap} and \textsc{Down-Heap} procedure. Specifically, in the first phase, once the first swap for the \textsc{Down-Heap} in level $i$ is finished, we can immediately start the \textsc{Down-Heap} on level $i-1$, instead of waiting for the \textsc{Down-Heap} in level $i$ to completely finish (the root is at level $0$, and level numbers increase going down). 
The swaps in the \textsc{Down-Heap} calls from level $i-1$ will never catch up with the swaps from level $i$.
Pipelining the second phase with \textsc{Up-Heap} is more complicated. Our parallel \textsc{Up-Heap} is run in a level-synchronous manner from the top level down to the bottom level.
For each node on each level, both of its children may want to swap with the parent for having a larger value. In the parallel algorithm, we only make the child with the smaller value swap with the parent and continue its update to the upper levels, while the update for the other child terminates. We prove in \full{\cref{appendix:heap}}{the full paper} that our parallel \textsc{Heapify} algorithm takes $O(m(1+\log (n/m)))$ work and $O(\log n)$ depth.

We now explain how to perform batch insertions and deletions.
A batch of $m$ insertions to a binary heap of size $n$ can be implemented using decrease-keys.
We first add the $m$ elements to end of the heap with keys of $\infty$. Then, we decrease the keys of these $m$ elements to their true values and run the parallel \heapify algorithm.
A batch of $m$ deletions can be processed similarly, but the deletions will generate ``holes'' in the tree structure, and so we need an additional step to fill these holes first.
We pack the last $m$ elements in the heap based on whether they are deleted.
Then, we use them to fill the rest of the empty slots by deletions, and run the parallel \heapify algorithm.
Hence, batch insertions and deletions take $O(m(1+\log ((n+m)/m)))$ work and $O(\log(n+m))$ depth.

We provide more details about our data structure and prove the following theorem in \full{\cref{appendix:heap}.}{the full version of the paper.}

\begin{theorem}\label{thm:heap}
For a batch-parallel binary heap of size $n$ and a batch update (a mix of inserts, deletes, and increase/decrease-keys) of size $m$, updating the heap takes $O(m(1+\log ((n+m)/m)))$ work and $O(\log (n+m))$ depth, and find-min takes $O(1)$ work.
\end{theorem}
\section{Implementations}\label{sec:implementation}

\myparagraph{Simplified Data Structure}
While the sparse partition maintains $(S_i,S_i',p_i,q_i,d_i)$ and $H_i$ for each level $1\leq i\leq L$, we found that implementing $S_i'$ and its associated heap $H_i$ on every level was inefficient in practice. We found it more efficient to only maintain $(S_i,p_i,q_i,d_i)$ for $1\leq i\leq L$, and one heap $H^*$ that stores the closest neighbor distances for all $q$ in $S_j$, where $j= L-\lceil \log_3 {2\sqrt{k}} \rceil$.
When $L$ changes due to insertion or deletion, we recompute $j$ and rebuild $H^*$ if necessary. We prove in \full{\cref{sec:appendix-simplified}}{the full version of the paper} that $H^*$ contains the closest pair. Our implementation uses the parallel heap from~\cite{sun2019parallel}.
Additionally, we compute $S_i'$  from $S_i$ on the fly when needed.

\myparagraph{Neighborhood Search}
Some of the work bounds are exponential in the dimensionality $k$, e.g., a grid's box neighborhood is of size $3^k$. For $k\geq 5$, the straightforward implementation is inefficient due to a large constant overhead in the work.
Hence, we implement a parallel batch-dynamic \kdt for $k\geq 5$.
This is because performing a range query on the tree works better in practice, as it only needs to traverse the non-empty boxes in the neighborhood instead of all boxes.
Our dynamic \kdt is a standard spatial median \kdt \cite{bentley1975multidimensional}, augmented with the capability for parallel batch updates.
Each internal node maintains metadata on the points in its subtree, which are partitioned by a spatial median along the widest dimension.
The points are only stored at leaf nodes. We flatten a subtree to a single leaf node when it contains at most 16 points.

The tree supports batch insertion by first adding the batch to the root, and then traversing down multiple branches of the tree in parallel.
At each internal node, we partition the inserted batch by the spatial median stored at the node, and modify its metadata, such as the point count and the coordinates of its bounding box. At each leaf node, we directly modify the metadata and store the points.
The tree supports batch deletions by modifying the metadata, and marking the deleted points at the leaves as invalid. We manage the memory periodically to free up the invalid entries. 

\myparagraph{Static Algorithms}
In addition to our batch-dynamic closest pair algorithm, 
we implement several sequential and parallel algorithms for the static closest pair problem. 
As far as we know, this paper presents the first experimental study of parallel algorithms for static closest pair.
We implement a parallel divide-and-conquer algorithm by Blelloch and Maggs~\cite{blelloch2010parallel},
a simplified and parallel version of Rabin's algorithm~\cite{rabin1976probabilistic} that we designed,
a parallel version of Khuller and Matias's~\cite{khuller1995simple} sieve algorithm that we designed,
and a parallel randomized incremental algorithm by Blelloch et al.~\cite{BGSS16}. 
We explain more details about these static algorithms and their implementations in \full{\cref{appendix:static}.}{the full paper.}

\section{Experiments}\label{sec:experiment}
\myparagraph{Algorithms Evaluated}
We evaluate our parallel batch-dynamic algorithm by benchmarking its performance on batch insertions (\defn{dynamic-insert}) and batch deletions (\defn{dynamic-delete}). We also evaluate the four static implementations described in Section~\ref{sec:implementation}, which we refer to as \defn{divide-conquer}, \defn{rabin}, \defn{sieve}, and \defn{incremental}. 
 In addition, we implement and evaluate sequential versions of all of our algorithms that do not have the overheads of parallelism.
Our implementations use the Euclidean metric ($L_2$-metric).

\myparagraph{Data Sets}
We use the synthetic seed spreader (SS) data sets produced by the
generator in~\cite{GanT17}. It produces points generated
by a random walk in a local neighborhood, but jumping to a random
location with some probability. \defn{SS-varden}
refers to the data sets with variable-density
clusters.
We also use a synthetic data set called
\defn{Uniform}, in which points are distributed uniformly at
random inside a bounding hyper-cube with side length $\sqrt{n}$, where
$n$ is the total number of points. The points have double-precision floating-point
values. We generated the
synthetic data sets with 10 million points for
dimensions $k=2,3,5,7$.
We name the data sets in the format of \defn{Dimension-Name-Size}.
We also use the following real-world data sets:
\defn{7D-Household-2M}~\cite{UCI} is a 7-dimensional data set containing household sensor data with $2,049,280$ points excluding the date-time information;
\defn{16D-Chem-4M}~\cite{fonollosa2015reservoir, CHEMURL} is a 16-dimensional data set with $4,208,261$ points containing chemical sensor data; and
\defn{3D-Cosmo-298M}~\cite{Kwon2010} is a 3-dimensional astronomy data set with $298,246,465$
points.

\myparagraph{Testing Environment} Our experiments are run on
an \texttt{r5.24xlarge} instance on Amazon EC2.  The machine has 2
$\times$ Intel Xeon Platinum 8259CL CPU (2.50 GHz) CPUs for a total of 48
 cores with two-way hyper-threading, and 768 GB of RAM. 
By default, we use all cores
with hyper-threading.  
We use the \texttt{g++} compiler (version 7.5)
with the \texttt{-O3} flag, and use Cilk Plus for parallelism~\cite{cilkplus}.
We use the \defn{-48h} and \defn{-1t} suffixes in our algorithm names to denote the 48-core with hyper-threading and single-threaded times, respectively.
We allocate a maximum of 2 hours for each test, and do not report times for tests that exceed this limit.

\begin{figure}[]
\centering
  \begin{minipage}[t]{0.49\textwidth}
    \includegraphics[width=\textwidth]{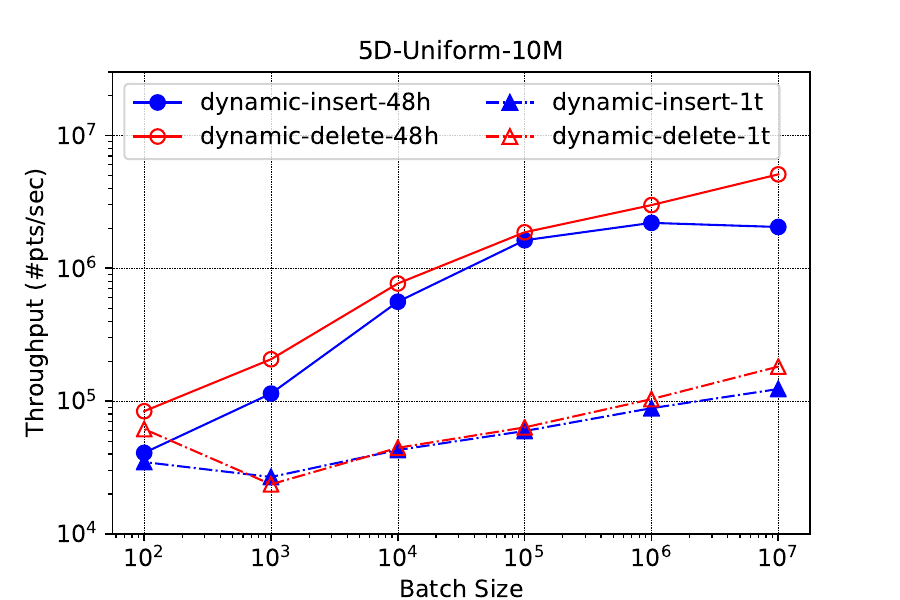}
  \end{minipage} %
  \begin{minipage}[t]{0.49\textwidth}
    \includegraphics[width=\textwidth]{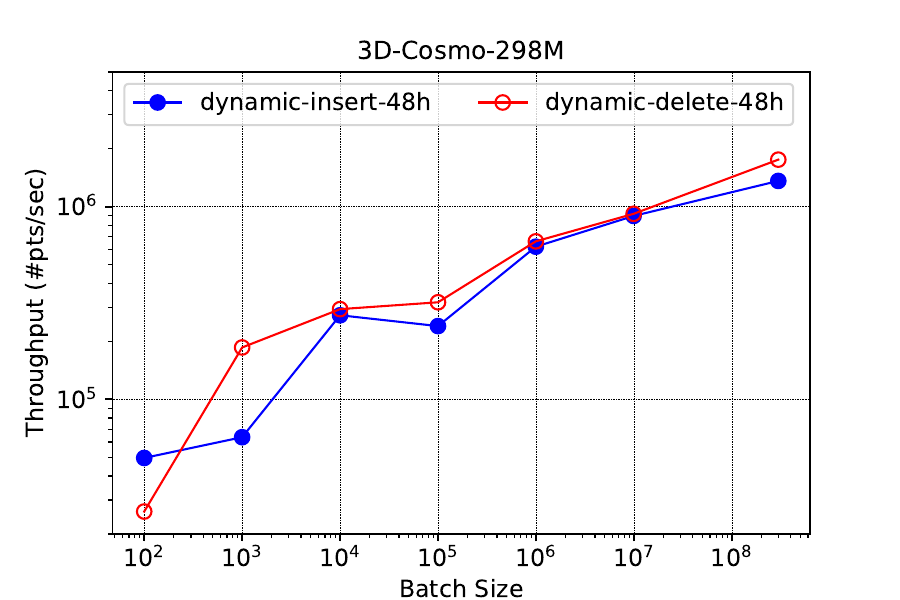}
  \end{minipage} %
  \caption{Plots of throughput vs.\ batch size in log-log scale for our parallel batch-dynamic algorithm on 5D-Uniform-10M and 3D-Cosmo-298M. 
  The algorithm on 48-cores with hyper-threading and 1 thread has a suffix of "48h" and "1t", respectively.
For 3D-Cosmo-298M, we omit the 1-thread times as the experiments exceeded our time limit.}\label{fig:throughput}
\end{figure} %

\myparagraph{Influence of Batch Size on Throughput}
In this experiment, we evaluate our batch-dynamic algorithm by measuring their throughput as a function of the batch size. For insertions, we insert batches of the same size until the entire data set is inserted. For deletions, we start with the entire data set and delete batches of the same size until the entire data set is deleted.
We compute throughput by the number of points processed per second.
We vary the batch size from $100$ points to the size of the entire data set.
Our parallel batch-dynamic algorithm achieves a throughput of up to $1.35\times 10^7$ points per second for insertion, and up to $1.06\times 10^7$ for deletion, under the largest batch size. On average, it achieves $1.75\times 10^6$ for insertion and $1.94\times 10^6$ for deletion across all batch sizes.
We show plots of throughput vs.\ batch size for 5D-Uniform-10M and 3D-Cosmo-298M in Figure~\ref{fig:throughput}.
We see that the throughput increases with larger batch sizes because of a lower relative overhead of traversing the sparse partition data structure, and the availability of more parallelism. 
\full{For completeness, the average update throughput for both insertions and deletions under varying batch sizes are shown in Table~\ref{table:throughput} in \cref{appendix:experiment}.}{}

\begin{figure}[]
\centering
  \begin{minipage}[t]{0.49\textwidth}
    \includegraphics[width=\columnwidth]{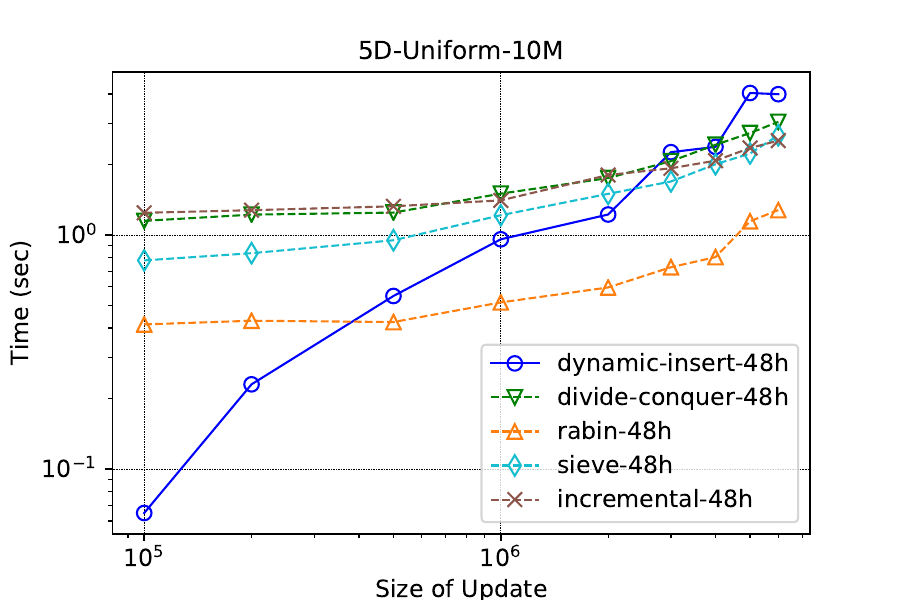}
  \end{minipage} %
  \begin{minipage}[t]{0.49\textwidth}
    \includegraphics[width=\columnwidth]{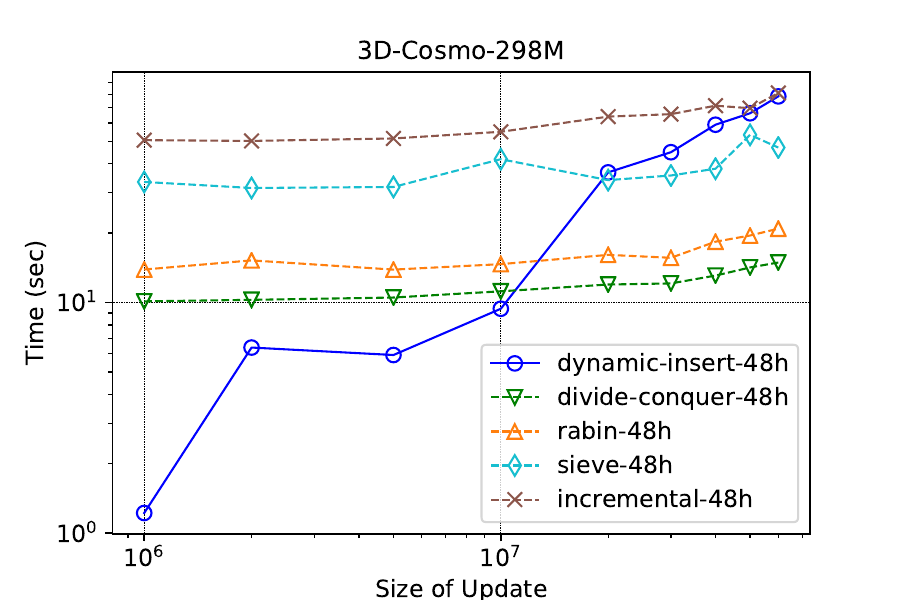}
  \end{minipage} %
  \caption{Plots of running time (in seconds) vs.\ insertion batch size for the dynamic and static methods using 48 cores with hyper-threading on 5D-Uniform-10M and 3D-Cosmo-298M. The plots are in log-log scale.}\label{fig:advantage}
\end{figure} %

\myparagraph{Efficiency of Batch Insertions}
We evaluate the performance of dynamic batch insertion vs.\ using a static algorithm to recompute the closest pair.
Specifically, we simulate a scenario where given the data structure storing the closest pair among $c$ data points, we perform an insertion of $b$ additional points. We compare the time taken by the dynamic algorithm to process one batch insertion of size $b$, vs.\ that of a static algorithm for recomputing the closest pair for all $c+b$ points.
We set $c$ to contain 40\% of the data set and vary $b$.
Figure~\ref{fig:advantage} shows the running time as a function of $b$ for 5D-Uniform-10M and 3D-Cosmo-298M.
For 5D-Uniform-10M, we see that our batch-dynamic algorithm outperforms the fastest among the static algorithms when the insertion batch size is smaller than 500,000.
For 3D-Cosmo-298M, we see that the dynamic method outperforms the fastest static algorithm when the insertion batch is smaller than 10 million.
In general, both the static and dynamic algorithms require more time to process the updates when the batch size is larger. The dynamic algorithm is much more advantageous for small to moderate batch sizes.

\begin{figure}[]
\centering
  \begin{minipage}[t]{0.49\textwidth}
    \includegraphics[width=\columnwidth]{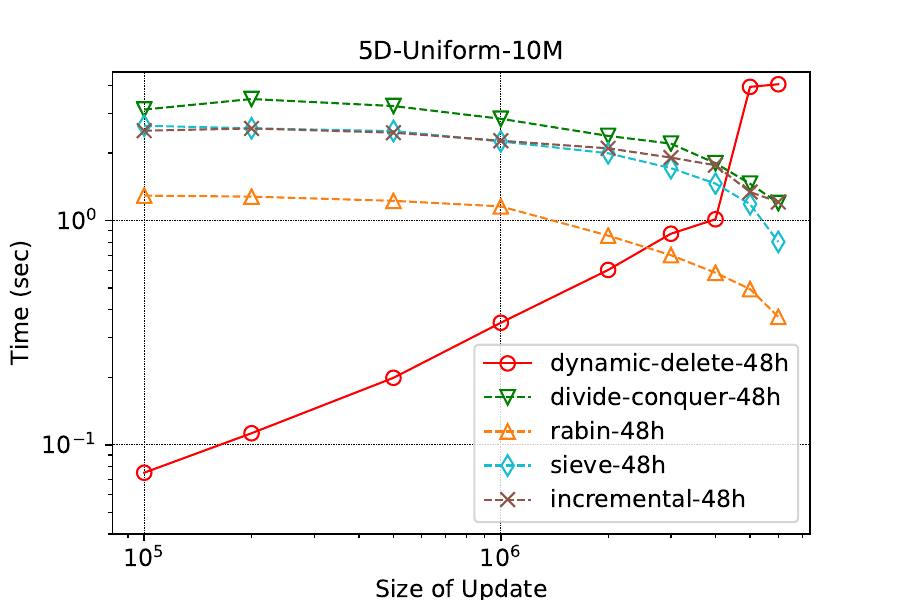}
  \end{minipage} %
  \begin{minipage}[t]{0.49\textwidth}
    \includegraphics[width=\columnwidth]{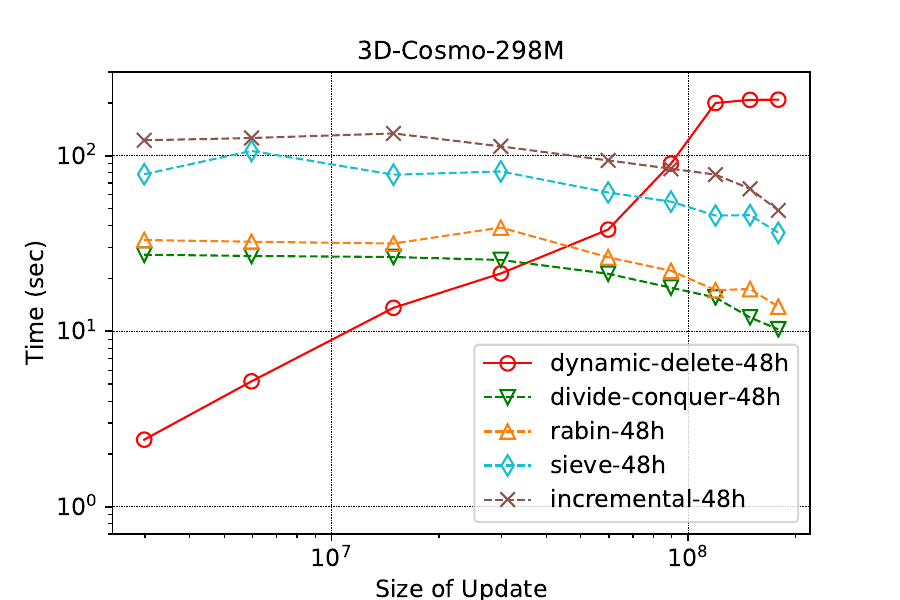}
  \end{minipage} %
  \caption{Plots of running time (in seconds) vs.\ deletion batch size for the dynamic and static methods using 48 cores with hyper-threading on 5D-Uniform-10M and 3D-Cosmo-298M. The plots are in log-log scale.}\label{fig:advantage-deletion}
\end{figure}

\myparagraph{Efficiency of Batch Deletions}
We evaluate the performance of dynamic batch deletion vs.\ using a static algorithm to recompute the closest pair. In this experiment, we are given the closest pair of all $n$ points in the data set, and perform a deletion of $b$ points. We compare the time taken for the dynamic algorithm to process one batch deletion of size $b$, vs.\ that of a static algorithm for recomputing the closest pair for the $n-b$ remaining points.
Figure~\ref{fig:advantage-deletion} shows the running time vs.\ deletion batch size for 5D-Uniform-10M and 3D-Cosmo-298M.
For 5D-Uniform-10M, the dynamic algorithm outperforms the fastest static algorithm when the batch size is less than 3 million.
For 3D-Cosmo-298M, the dynamic algorithm outperforms the static algorithm when the batch size is less than 60 million.
In general, our dynamic algorithm requires more time to process the update when the batch size is larger, while the converse is true for the static algorithms. 

Compared to the fastest static algorithms on our data sets, we find that it is faster to use our dynamic algorithm for batch sizes of up to 20\% of the data set.

\begin{table}[t]
\centering
\setlength{\tabcolsep}{4pt}
\begin{tabular}{c|ccc|ccc|ccc|ccc}
                & \multicolumn{3}{c|}{Divide-Conquer} & \multicolumn{3}{c|}{Rabin} & \multicolumn{3}{c|}{Sieve} & \multicolumn{3}{c}{Incremental} \\
                & Seq        & 1t        & 48h     & Seq      & 1t     & 48h & Seq      & 1t     & 48h & Seq       & 1t       & 48h    \\ \hline
2D-Uniform-10M  & 9.54       & 9.62       & 0.24      & 11.2    & 11.6   & 0.28  & 23.3    & 24.5   & 0.81  & 22.1     & 17.7     & 1.02     \\
3D-Uniform-10M  & 24.9      & 25.2      & 0.66      & 28.4    & 30.5   & 0.78  & 60.3    & 60.6   & 1.82  & 50.5     & 46.2     & 2.50     \\
5D-Uniform-10M  & 101     & 136     & 3.04      & 25.3    & 28.4   & 1.28  & 56.7    & 60.6   & 2.63  & 49.2     & 50.3     & 2.40     \\
7D-Uniform-10M  & 561     & 618     & 14.7     & 81.7    & 82.8   & 1.70  & 124   & 135  & 4.24  & 93.7     & 106    & 4.58     \\
2D-SS-varden-10M   & 7.58       & 8.95       & 0.23      & 10.5    & 11.2   & 0.26  & 22.2    & 22.8   & 0.94  & 23.4     & 17.5     & 1.11     \\
3D-SS-varden-10M   & 17.3      & 19.1      & 0.51      & 28.4    & 29.1   & 0.77  & 58.4    & 58.3   & 1.68  & 48.7     & 43.1     & 1.97     \\
5D-SS-varden-10M   & 24.9      & 33.4      & 0.82      & 22.6    & 26.1   & 1.43  & 47.2    & 49.3   & 2.58  & 40.4     & 41.7     & 2.44     \\
7D-SS-varden-10M   & 43.1      & 50.3      & 1.33      & 33.1    & 34.0   & 1.61  & 64.4    & 70.9   & 3.00  & 43.4     & 48.0     & 2.53     \\
7D-Household-2M & 342     & 392     & 13.4     & 7.23     & 7.70    & 0.40  & 15.9    & 18.1   & 0.73  & 13.8     & 15.7     & 0.94     \\
16D-Chem-4M     & 315     & 499     & 202    & 38.3    & 39.8   & 1.38  & 88.2    & 96.7   & 2.68  & 59.1     & 70.8     & 3.91     \\
3D-Cosmo-298M   & 750     & 747     & 20.7     & 1243  & 1625 & 31.6 & 3383  & 2819 & 70.6 & 3456   & 2629   & 104  
\end{tabular}

\caption{Running times (in seconds) of static algorithms. "Seq" denotes the sequential implementation. "1t" and "48h" denote the parallel implementation run on 1 thread and 48 cores with hyper-threading, respectively.
}\label{table:static}
\end{table}

\myparagraph{Static Methods}
We evaluate and compare the static algorithms and present all detailed running times in Table~\ref{table:static}.
Among the four parallel static algorithms, Rabin's algorithm is on average 7.63x faster than the rest of the algorithms across all data sets.
The divide-and-conquer, sieve, and the incremental algorithms are on average 17.86x, 2.29x, and 2.73x slower than Rabin's algorithm, respectively.
The divide-and-conquer algorithm actually achieves the fastest parallel running time on 7 out of the 11 data sets. However, it is significantly slower for most of the higher dimensional data sets, due to its higher complexity with increased dimensionality.
The sieve algorithm and the incremental algorithm, though doing the same amount of work in theory as Rabin's algorithm, have higher constant factor overheads.

\myparagraph{Parallel Speedup and Work-Efficiency}
We measure the parallel speedups of our implementations by dividing the 1-thread time by the 48-core with hyper-threading time.
Our parallel batch-dynamic algorithm achieves up to 38.57x self-relative speedup (15.10x on average across all batch sizes), averaging over both insertions and deletions. 
Our static implementations achieve up to 51.45x speedup (29.42x on average). Specifically, the divide-and-conquer algorithm, Rabin's algorithm, the sieve algorithm, and the incremental algorithm achieve average self-relative speedups of 35.17x, 33.84x, 29.22x, and 19.45x, respectively; in addition, they achieve an average speedup of 19.10x, 23.56x, 10.56x, and 9.23x, respectively, over the fastest serial algorithm for each data set.

Our parallel implementations when run on one thread demonstrate modest overheads over their sequential counterparts.
Our parallel batch-dynamic algorithm running on 1 thread has only 1.13x lower throughput on average over our sequential implementation of the algorithm. 
For the static algorithms, the parallel divide-and-conquer, Rabin's, sieve, and incremental algorithms running on 1 thread are only 1.18x, 1.08x, 1.04x, and 1.00x slower on average, respectively, than their corresponding sequential algorithms.


\bibliography{ref}

\full{
\appendix

\section{Extended Preliminaries}\label{appendix:prelims}

In this section, we summarize the parallel primitives used in this paper. We also summarize all notations used in Table~\ref{table:notation}.

\begin{table}[t]
\centering
  \begin{tabular}{|c | p{9cm} |}
\hline
\textbf{Notation} & \multicolumn{1}{c|}{\textbf{Definition}} \\ \hline
$k$ & Dimensionality of the data set.\\ \hline
$S$ & Point data set $\{p_1, p_2, \ldots, p_{n}\}$ in $\mathbb{R}^k$. \\ \hline
$n$ & Size of $S$ ($|S|$).\\ \hline
$m$ & Size of a batch update.\\ \hline
$d(p,q)$ & Distance between points $p,q\in S$. \\ \hline
$\delta(S)$ & $\min\{d(p,q):p,q\in S, p\neq q\}$, i.e., the distance of the closest pair in set $S$. \\ \hline
$d(p,S)$ & $\min\{d(p,q): q\in S \setminus p\}$, i.e., the distance of $p$ to its nearest neighbor in set $S$. \\ \hline
$d_i^*(p)$ & The restricted distance of point $p$, $d_i^*(p) := d(p,S_{i-k}'\cup S_{i-k+1}'\cup \ldots \cup S_i')$. \\ \hline
$(S_i,S_i',p_i,q_i,d_i)$ & The 5-tuple representing each level of the sparse partition data structure, where $S_i$ and $S_i'$ are point sets, $p_i$ is the pivot point, $q_i$ is the closest point to $p_i$ in $S_i$, and $d_i:=d(p_i,q_i)$. \\ \hline
$G_i$ & The grid structure with box side length $d_i/6k$ at level $i$ of the sparse partition. \\ \hline
$H_i$ & The parallel heap associated with level $i$ of the sparse partition. \\ \hline
$L$ & The number of levels in the sparse partition. \\ \hline
$b_i(p)$ & The box containing $p$ on level $i$. \\ \hline
$b^{\sigma}_i(p)$ & The box with a offset of $\sigma$ relative to $b_i(p)$. $\sigma$ is a $k$-tuple over $\{-1,0,1\}$, where the $j$'th component indicates the relative offset in the $j$'th dimension. \\ \hline
$BN_i(p)$ & The box neighborhood of $p$, i.e., the collection of the $3^k$ boxes bordering and including the box containing $p$ on level $i$. \\ \hline
$BN^{\sigma}_i(p)$ & The partial box neighborhood of $p$, i.e., the intersection of $N_i(p)$ with the boxes bordering and including $b^{\sigma}_i(p)$. \\ \hline
$N_i(p,S)$ & The neighborhood of $p$ in set $S$, i.e., the set of points in $S\setminus p$ contained in $N_i(p)$. \\ \hline

\end{tabular}
  \caption{Summary of Notation.}\label{table:notation}
\end{table}

\defn{Prefix sum} takes as input a sequence $[a_1, a_2, \ldots , a_{n}]$, an
associative binary operator $\oplus$, and an identity $i$, and returns
the sequence $[i, a_1, (a_1 \oplus a_2), \ldots , (a_1 \oplus a_2
  \oplus \ldots \oplus a_{n-1})]$ as well as the overall sum of the
elements.  \defn{Filter} takes an array $A$ and a
predicate function $f$, and returns a new array containing $a \in A$
for which $f(a)$ is true, in the same order that they appear in $A$.
Both prefix sum and filter can be implemented in $O(n)$ work and $O(\log n)$
depth~\cite{JaJa92}. We use a parallel \defn{minimum} algorithm, which computes the minimum of $n$ points in $O(n)$ expected work and $O(1)$ depth \whp{}~\cite{vishkin10}. 
We use parallel dictionaries, which support $n$ insertions, deletions, or lookups in $O(n)$ work and $O(\log^*n)$ depth \whp{}~\cite{Gil91a}. Finally, we use \defn{integer sorting}  on $n$ keys in the range $[0,\ldots,O(\log n)]$, which takes $O(n)$ work and $O(\log n)$ depth~\cite{RR89,vishkin10}.

\section{Restricted Distance}\label{appendix:golin}

The \defn{restricted distance} $d^*_i(p)$~\cite{golin1998randomized} is the closest pair distance from point $p$ to any point in $\bigcup_{0\leq j \leq k} S_{i-j}'$, and defined as $d^*_i(p):= \min\{d_i, d(p, S_{i-k}'\cup S_{i-k+1}' \cup \ldots \cup S_i')\}$, where $p\in S_i'$.
Golin et al.\ show that $\delta(S) = \min_{L-k\leq i\leq L} \min_{p\in S_i'} d_i^*(p)$, meaning that the closest pair can be found by taking the minimum among the restricted distance pairs for all points in last $k+1$ levels of $S_i'$.
For completeness, we present the following lemmas and theorem from Golin et al.~\cite{golin1998randomized} to show that $\delta(S) = \min_{L-k\leq i\leq L} \min_{p\in S_i'} d_i^*(p)$.

\begin{lemma}\label{lemma:rd}
$d_i^*(p)>d_i/6k$ for $p\in S_i$. 
\end{lemma}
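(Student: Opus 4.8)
The plan is to unfold the definition $d_i^*(p) = d\big(p,\, S_{i-k}'\cup\cdots\cup S_i'\big)$ and to show that $d(p,q) > d_i/6k$ for \emph{every} point $q\ne p$ appearing in that union, using only the grid-based notion of sparsity and the geometric decay $d_{j+1}\le d_j/3$ of the pivot distances. Since $d_i^*(p)$ is the minimum of these $d(p,q)$, the claim follows immediately.

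The one geometric ingredient I would isolate first is the following fact about the grid $G_j$, whose boxes have side length $d_j/6k$: if a point $x\in S_j$ is sparse in $G_j$ relative to $S_j$ (that is, $N_j(x,S_j)=\emptyset$), then any other $y\in S_j$ lies in a box that differs from $x$'s box by at least $2$ in some coordinate, and because the boxes are half-open and the $L_p$-metric dominates every coordinate difference, this forces $d(x,y) > d_j/6k$ strictly. I would also record the containment chain $S_i'\subseteq S_i\subseteq S_{i-1}\subseteq\cdots\subseteq S_1=S$ together with the observation that, by rule (5), $S_{i-j}'$ is disjoint from $S_{i-j+1}\supseteq S_i$ for every $j\ge 1$; hence $p\notin S_{i-j}'$ and in particular $p\ne q$ for every $q\in S_{i-j}'$ with $j\ge 1$, so $p$ is never its own witness at those earlier levels.

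With these in hand, the proof is a two-case split on where $q$ comes from. If $q\in S_i'\setminus\{p\}$, then $p$ and $q$ both lie in $S_i$ and $p$ is sparse in $G_i$ relative to $S_i$, so the grid fact gives $d(p,q) > d_i/6k$. If $q\in S_{i-j}'$ for some $1\le j\le k$, then $p,q\in S_{i-j}$, $p\ne q$, and $q$ is sparse in $G_{i-j}$ relative to $S_{i-j}$, so the grid fact gives $d(p,q) > d_{i-j}/6k$; iterating $d_{\ell+1}\le d_\ell/3$ yields $d_{i-j}\ge 3^j d_i\ge 3 d_i$, hence $d(p,q) > d_i/2k > d_i/6k$. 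Minimizing over all admissible $q$ gives $d_i^*(p) > d_i/6k$.

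The main thing to get right is the \emph{strictness} of the bound $d(x,y) > d_j/6k$: it relies on the grid cells being treated as half-open intervals, so that two points whose cells differ by exactly $2$ in a coordinate are separated by strictly more than one cell width in that coordinate; with closed cells one would only obtain $\ge$, which would not suffice for the statement. The boundary case $i\le k$, where the union begins at $S_1'$ rather than $S_{i-k}'$, changes nothing, since the two-case analysis above is unaffected.
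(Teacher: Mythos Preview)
Your approach is essentially the paper's: for any $q\in S_j'$ with $j\le i$, show $d(p,q)>d_j/6k\ge d_i/6k$ using that points of $S_j'$ are far from every other point of $S_j$. The paper does this in one line via property~(4.2) of the sparse partition (if $q\in S_j'$ then $d(q,S_j)>d_j/6k$, and $p\in S_i\subseteq S_j$), whereas you re-derive that property from the grid geometry and then split, unnecessarily, into the cases $j=i$ and $j<i$ with the detour through $3^j d_i$.

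There is one slip to fix: in the case $q\in S_i'\setminus\{p\}$ you assert that ``$p$ is sparse in $G_i$ relative to $S_i$'', but the lemma only assumes $p\in S_i$, so $p$ need not be sparse. It is $q\in S_i'$ that is sparse, and your grid fact applied with $x=q$, $y=p$ still gives $d(p,q)>d_i/6k$. With the roles swapped the argument is complete; your half-open-cell discussion is a valid way to secure the strict inequality, though appealing to~(4.2) directly sidesteps that concern.
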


\begin{proof}
Let $1\leq j\leq i$ and let $q\in S_j'$, where $q$ is an arbitrary point. Since $p\in S_j$, it follows from properties of the sparse partition (\cref{sec:golin:sp}) that $d(p,q)\geq d(q,S_j) > d_j/6k\geq d_i/6k$. Therefore the distance from $p$ to its closest pair $d_i^*(p)>d_i/6k$.
\end{proof}

\begin{lemma}\label{lemma:rd2}
$d_L/6k < \delta(S)\leq d_L$. 
\end{lemma}

\begin{proof}
Let $\delta(S) = d(p,q)$ for some $p\in S_i'$ and $q\in S_j'$, and without loss of generality $i\leq j$. It follows from the definition of the sparse partition that $p,q\in S_i$, and we have $d(p,q)=d(p,S_i)>d_i/6k$, hence $d(p,q)>d_L/6k$.

$\delta(S)\leq d_L$ obviously holds since $d_L$ is the distance between two points. 
\end{proof}

\begin{theorem}\label{thm:rd}
$\delta(S) = \min_{L-k\leq i\leq L} \min_{p\in S_i'} d_i^*(p)$.
\end{theorem}

\begin{proof}

Since the restricted distance is the distance between two points, we have that $\delta(S)\leq \min_{1\leq i\leq L} \min_{p\in S_i'} d^*_i(p)$.
Let $\delta(S) = d(p,q)$ for some $p\in S_i'$ and $q\in S_j'$. Assume without loss of generality that $j\leq i$, and it is obvious that $d(p,q)=d(p,\bigcup_{h\leq i} S_h')\geq d_i^*(p)$. Therefore $\delta(S)\geq \min_{1\leq i\leq L} \min_{p\in S_i'} d_i^*(p)$, and hence $\delta(S) = \min_{1\leq i\leq L} \min_{p\in S_i'} d_i^*(p)$.

We now prove that $i$ cannot be less than $L-k$.
By Lemma~\ref{lemma:rd}, we have $\min_{p\in S_i'} d^*_i(p)> d_i/6k$. We also know from Lemma~\ref{lemma:rd2}, and from the properties of the sparse partition ($d_{i+1}\leq d_i/3$), that for $i<L-k$, $d_i/6k\geq d_{L-k-1}/6k \geq (3^{k+1}/6k)\cdot d_L > d_L \geq \delta(S)$.
Therefore, $\delta(S)\geq \min_{L-k\leq i\leq L} \min_{p\in S_i'} d_i^*(p)$.
\end{proof}

The sequential algorithm~\cite{golin1998randomized} computes the restricted distance for each point in $S_i'$, and stores them in min-heaps $H_i$, for $1\leq i\leq L$.
To obtain the closest pair, we simply read the minimum in $H_i$ for $L-k\leq i \leq L$ to obtain $k+1$ values, and then take the overall minimum. This takes $O(1)$ work and depth.
\section{Parallel Construction Analysis}\label{appendix:construction}

In this section, we present analysis for the parallel construction algorithm in Section~\ref{sec:par-alg} and Algorithm~\ref{alg:build-short}.
For each call to \textsc{Build}, given $O(|\s{i}|)$ points, insertions to the parallel dictionary take $O(|\s{i}|)$ work and $O(\log^* |\s{i}|)$ depth \whp.
Line~\ref{codeline:build-short:pivot} computes the distance of $p_i$ to each $q\in S_i$, taking $O(|S_i|)$ work and $O(1)$ depth. Then we obtain $q_i$ via a parallel minimum computation in the same work and depth.
Checking the sparsity of points takes $O(|\s{i}|)$ work and $O(1)$ depth, since each point checks $3^k$ boxes to see whether any are non-empty 
(Line~\ref{codeline:build-short:si}). Therefore,
except for the cost of Line~\ref{codeline:build-short:spawn} and the recursive call on Line~\ref{codeline:build-short:recur}, each call to \textsc{Build} takes $O(|\s{i}|)$ expected work and $O(\log^* |\s{i}|)$ depth \whp{}.
Line~\ref{codeline:build-short:spawn} creates a parallel heap with $O(|\sprime{i}|)$ entries, which takes $O(|\sprime{i}|)$ work and $O(\log |\sprime{i}|)$ depth, which we prove in \cref{appendix:heap}. 
Since $\sum{|\s{i}|}=O(n)$~\cite{golin1998randomized}, the total work across all calls to \textsc{Build} is hence $O(n)$ in expectation.
Since our heap is of linear size, the
total space usage of our data structure is also 
$O(n)$ in expectation. 

We now prove that the algorithm has polylogarithmic depth by first showing Lemma~\ref{lemma:rounds}.

\begin{lemma}\label{lemma:rounds}
Algorithm~\ref{alg:build-short} makes $O(\log n)$ calls to \textsc{Build} \whp, and the sparse partition has $O(\log n)$ levels \whp.
\end{lemma}

\begin{proof} 
We show that with at least $1/2$ probability, $|S_{i+1}|\leq |S_{i}|/2$.
Consider relabeling the points in $S_i=\{r_1, r_2, \ldots, r_{|S_i|}\}$ such that $d(r_1, S_i) \leq d(r_2, S_i) \leq \ldots \leq d(r_{|S_i|}, S_i)$. 
If we pick the pivot $p_i=r_j$, then for every $r_k$ with $k>j$, we have $d(r_j,S_i)\leq d(r_k,S_i)$, and so $r_k$ is not in $S_{i+1}$. Since the pivot is chosen randomly, it can be any $r_k$ with equal probability, and with $1/2$ probability $k \leq |S_i|/2$, implying that $|S_{i+1}|\leq |S_i|/2$.

By definition, we have at most $\log_2 n$ levels where the set size decreases by at least a half. We define such levels as \emph{good levels}.
We want to analyze the number of pivots needed to be chosen until we have $\log_2n$ good levels, at which point there are no additional levels in the data structure. Let $X$ be a random variable denoting the number of levels needed until we see  $\log_2n$ good levels. $X$ follows a negative binomial distribution with parameters $k=\log_2n$ successes, and $p=1/2$ probability of success.  

Since pivots are chosen independently across the levels, we can use the following form of the Chernoff bound for variables following a negative binomial distribution~\cite{harvey15}:
$$\Pr[X> L] \leq e^{-\delta^2k/(2(1-\delta))} \text{, where\ } 0<\delta<1 \text{\ and\ } L=\frac{k}{(1-\delta)p}.$$

If we set $\delta=7/8$, we have $L=16\log_2 n$ and
$$\Pr[X> 16\log_2 n] \leq e^{-(7/8)^2 \log_2n/(2(1-7/8))} = e^{-(49/16)\log_2 n} < n^{-3}.$$




This means that the sparse partition has no more than $16\log_2 n=O(\log n)$ levels \whp, which also bounds the number of recursive calls to \textsc{Build} in Algorithm~\ref{alg:build-short}.
\end{proof}

As mentioned earlier, the depth of each call to \textsc{Build} is $O(\log^* n)$ \whp, excluding the cost for heap construction and the recursive call. Since the heap insertions are asynchronous, they take a total of $O(\log n)$ depth. Therefore, the total depth of Algorithm~\ref{alg:build-short} is $O(\log n\log^* n)$ \whp. This gives Theorem~\ref{theorem:build}.

\section{Parallel Insertion}\label{appendix:insertion}

\begin{figure}[t]
\center
  \begin{minipage}[t]{0.49\textwidth}
    \includegraphics[width=0.5\textwidth]{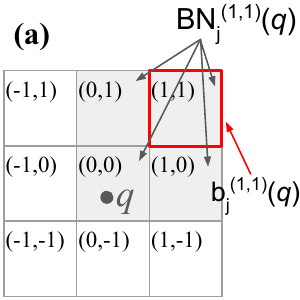}
  \end{minipage}
  \begin{minipage}[t]{0.49\textwidth}
    \includegraphics[width=0.5\textwidth]{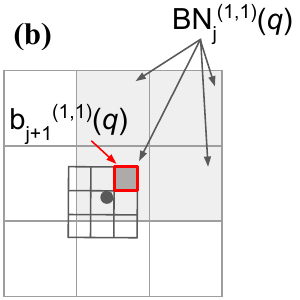}
  \end{minipage}
  \caption{(a) shows $BN_j(q)$, the box neighborhood of point $q$. Box $b_j^{(1,1)}(q)$ is marked in red. The partial box neighborhood $BN_j^{(1,1)}(q)$ is shaded. (b) shows that $b_{j+1}^{(1,1)} (q)$ is spatially contained in $BN_j^{(1,1)}(q)$.}\label{fig:packing}
\vspace{1em}
\end{figure}

This section presents the missing proofs of \cref{lemma:packing} and \cref{lemma:sum-q} from \cref{sec:par-alg}.

\packing*
\begin{proof}
We want to prove that the number of points moved across sparse partitions for the insertion of $Q$ with $m$ points is $O(m)$, i.e., $|\bigcup_{1\leq i\leq L} \down_i| \leq m\cdot 3^k = O(m)$.
Our proof shares some notation with Golin et al.~\cite{golin1998randomized}. For a level $i$ and a point $q\in Q$, we let  $\down_i(q)=\down_i\cap N_i(q,S_i')$.
Let $b_i(q)$ denote the box that contains point $q$. Let the \defn{box neighborhood} of $q$ in $G_i$, denoted by $BN_i(q)$, consist of $b_i(q)$ itself and the collection of $3^k-1$ boxes bordering $b_i(q)$.
We number the $3^k$ boxes in $BN_i(q)$ as a $k$-tuple over values $\{-1,0,1\}$, where the $j$'th component indicates the relative offset of the box with respect to $b_i(q)$ in the $j$'th dimension. We denote the box with a relative offset of $\sigma$ with respect to $b_i(q)$ as $b_i^\sigma(q)$, where $\sigma$ is the $k$-tuple (see \cref{fig:packing}a).
We further define the \defn{partial box neighborhood} of a point $q$, denoted by $BN_i^\sigma(q)$, as the set of boxes in $BN_i(q)$ 
that intersect with the boxes bordering on and including $b_i^\sigma(q)$ (see \cref{fig:packing}a).

We first show that $\sum_{l> j} |\down_{l}(q)| \leq 3^k$ for a single insertion $q$ with respect to level $j$.
Let $x\in \down_{j+1}(q)$ for some level $j$. By definition, $x$ is in a box of $BN_j(q)$ for some $q\in Q$ and $N_j(x,S)=\emptyset$. 
Therefore, the boxes in the partial box neighborhood $BN_j^\sigma (q)$ contain no points other than $x$ itself.
Suppose $x$ is in the box $b_j^{\sigma}(q)$. Now consider the box neighborhood of $x$ in level $l$ where $l>j$, and for the sake of contradiction, suppose there is some other point $y\in b_l^\sigma (q)$. Since $d_l\leq d_{j+1}\leq d_j/3$ by properties of the sparse partition, $b_l^\sigma (q)$ is spatially contained in $BN_j^\sigma(q)$, and hence we have $y\in BN_j^\sigma (q)$, which is a contradiction (see \cref{fig:packing}b).
Therefore, for any level $l>j$, there cannot be any point in $\down_{l}(q)$ with signature $\sigma$ except for $x$. 
For any $q\in Q$, since the number of partial box neighborhoods $BN_l^{\sigma}(q)$ across all value of $\sigma$ that do not share any points with each other 
is at most $3^k$, we have that $\sum_{l> j} |\down_{l}(q)| \leq 3^k$. 

We have shown that the number of points moved by a single insertion across levels is at most $3^k$, and now we extend the argument to batch insertion.
Consider another point $p\in Q$ inserted in the same batch as $q$. If $BN_j(p)$ and $BN_j(q)$ do not share any boxes, then the argument above holds for $p$ and $q$ independently.
We are concerned with the case where $BN_j(p)$ and $BN_j(q)$ overlap.
However, analyzing $p$ and $q$ separately can only overcount the point $x$ in the above argument if it appears in the partial box neighborhoods of both $BN_j^{\sigma'} (p)$ and  $BN_j^{\sigma} (q)$ for some $\sigma$ and $\sigma'$.
This argument can be extended to an arbitrary subset of $Q$ beyond $p$ and $q$ in a similar manner. Therefore, we can analyze each point separately to get an upper bound of $\sum_{q\in Q, 1\leq i\leq L}|\down_i(q)| \leq m\cdot 3^k=O(m)$.
\end{proof}

\subsection{Lemma~\ref{lemma:sum-q} Proof}

\sumq*

\begin{proof}
Consider points $r$ in $Q_i$ in an increasing order of $d(r, S_i\cup Q_i)$. There is a $1/2$ chance that the pivot $p_i$ is chosen such that $d(p_i, S_i\cup Q_i)$ is not larger than that of at least half of the points in $Q_i$, making them sparse and not in $Q_{i+1}$. Therefore $|Q_{i+1}|\leq |Q_i|/2$ in expectation. Given that $|Q_1| = O(m)$, we know $\sum_{1\leq i\leq L}E[|Q_i|] = O(m)$.
\end{proof}
\section{Parallel Deletion}\label{appendix:deletion}

\subsection{Algorithm}

The pseudocode for our batch deletion algorithm is shown in Algorithm~\ref{alg:grid-delete}. 
It takes as input $(S_i,S_i',p_i,q_i,d_i)$ and $H_i$ for $1\leq i\leq L$, and a batch of points $Q$ to be deleted.
We update the data structure level by level similar to the insertion algorithm, but in the opposite direction, starting at the last level $L$.
We define $Q_i$ for level $i$ as $Q\cap S_i$. From the property of the sparse partition, we have $Q_j\subseteq Q_i$ for all $i< j\leq L$. At each level, we delete each point in $Q_i$ from $S_i$, and also from $S_i'$ if it exists.

\begin{algorithm}[t]
\SetAlgoLined
  \SetKwInOut{Input}{Input}\SetKwInOut{Output}{Output}
  \Input{$(S_i, S_i', p_i, q_i, d_i)$ and $H_i$ for $1\leq i \leq L$; a batch $Q$ to be deleted.}

 \SetKwFunction{algo}{algo}\SetKwFunction{proc}{proc}
 \SetKwProg{myalg}{Algorithm}{}{}
 \SetKwProg{myproc}{Procedure}{}{}

 \myalg{\upshape\textsc{Main}()}{
   Determine $Q_i$ for all $1\leq i\leq L$. Specifically, for each level from $L$ to 1, compute $Q_i = Q\cap S_i$.\label{line:grid-delete:q-i}
 
   \textsc{Delete}($\emptyset$, $L$)\;\label{line:grid-delete:delete}

   Rebuild from the level with the smallest $i$ that needed a rebuild. Specifically, call \textsc{Build}$(S_i, i)$.\label{line:grid-delete:rebuild}
 }{}
   
 \myproc{\upshape\textsc{Delete}($\up_{i}$, $i$)\label{line:grid-delete:delete-procedure}}{
   $\up_{i-1}:=$ \textsc{GridDelete}($\up_{i}$, $i$)\;\label{line:grid-delete:grid}
   \textsc{HeapUpdate}$(i)$\;\label{line:grid-delete:heap}
   \lIf{$i-1\geq 1$}{\textsc{Delete}($\up_{i-1}$, $i-1$)}\label{line:grid-delete:delete-recur}
 }

 \myproc{\upshape\textsc{GridDelete}($\up_{i}$, $i$)}{
   Determine if $p_i$, $q_i$, or $d_i$ should change after deleting $Q_i$, which happens if at least one of $p_i$ or $q_i$ is in $Q_i$. If so, mark level $i$ for rebuild.\label{line:grid-delete:pivot}

   Insert each point in $\up_{i}$ into the dictionary of $S_i'$ in parallel.\label{line:grid-delete:insert}
   
   For each point $x$ in $Q_i$ in parallel, delete $x$ from $S_i$ and $S_i'$.\label{line:grid-delete:delete-q}

   For each point $r$ in $N_i(x, S_i)$ where $x\in Q_i$, check $N_{i-1}(r, S_{i-1})$. If $N_{i-1}(r, S_{i-1}) \subseteq Q_{i-1}$, then delete $r$ from $S_i$ and $S_i'$, and insert $r$ into the set $\up_{i-1}$.\label{line:grid-delete:up}

   \Return $\up_{i-1}$\;
 }

 \caption{Batch Delete}\label{alg:grid-delete}
\end{algorithm}

While the insertion algorithm moves sets of points $\down_{i}$ from level $i-1$ to level $i$, the deletion algorithm moves points in the opposite direction, from level $i+1$ to $i$.
We define $\up_{i}$ to be the set of points that move from level $i+1$ to level $i$,
i.e., $\up_{i} = \{x\in S_{i+1}: N_{i}(x,S_i)\subseteq Q_i\}$. 
They are the points $x$ in $S_{i+1}$ that only contain points from $Q_i$ in their neighborhoods $N_i(x, S_i)$ in level $i$; when $Q_i$ is deleted, they will become sparse in $S_i$, and will no longer be in $S_{i+1}$. Eventually, the points in $\up_i\setminus \up_{i-1}$ are added to both $S_i$ and $S_i'$.

Initially, we determine $Q_i$ for all levels via a backward pass starting from level $L$ (Line~\ref{line:grid-delete:q-i}).
Given $Q_i\subseteq Q_j$ for $i>j$, when a point is added to $Q_i$, it will be added to all $Q_j$ where $j<i$.
We pass an empty $\up_{L}$ to procedure \textsc{Delete} (Line~\ref{line:grid-delete:delete}).
In the procedure \textsc{Delete} (Line~\ref{line:grid-delete:delete-procedure}), the algorithm performs the deletion from the grid at level $i$ (Line~\ref{line:grid-delete:grid}), updates the heap (Line~\ref{line:grid-delete:heap}), and then recursively calls \textsc{Delete} on level $i-1$ until deletion is complete on level $1$ (Line~\ref{line:grid-delete:delete-recur}). 
Like in the insertion algorithm, we determine whether to rebuild at each level, but unlike insertion we delay the rebuild until the end of the algorithm (Line~\ref{line:grid-delete:rebuild}).
We call rebuild just once, on the level with the smallest $i$ that needs a rebuild (as this will also rebuild all levels greater than $i$).

In the procedure \textsc{GridDelete}($up_i,i$), we determine if the pivot needs to change based on whether at least one of $p_i$ and $q_i$ are in $Q_i$. If so, we mark level $i$ for rebuilding (Line~\ref{line:grid-delete:pivot}).
We then insert $\up_{i}$ into $S_i'$ and delete the points in $Q_i$ from $S_i$ and $S_i'$ if they exist (Lines~\ref{line:grid-delete:insert}--\ref{line:grid-delete:delete-q}).

We determine $\up_{i-1}$ by finding the points that will become sparse in level $i-1$ (Line~\ref{line:grid-delete:up}).
Since the movement of $\up_{i-1}$ from level $i$ to $i-1$ is due to the deletion of $Q_i$, we enumerate the candidates for $\up_{i-1}$ from $N_{i}(x,S_{i})$ where $x\in Q_i$. Then, for each candidate $r$, we check if $N_{i-1}(r, S_{i-1})$ only consists of points in $Q_{i-1}$, which are to be deleted in $i-1$. If so, $r$ will move up to a level less than or equal to $i-1$, and so we add $r$ to $\up_{i-1}$. 
A few details need to be noted to make the computation of $\up_{i-1}$ take $O(m)$ work. First, when checking the neighborhood $N_i(x,S_i)$ for the candidates $r$, we should only check a neighboring box if it contains at most one point, since otherwise the candidate would not be sparse in $S_{i-1}$. This bounds the work of enumerating candidates to $O(3^k\cdot m)$.
We next describe the process for checking for each candidate $r$ whether $N_{i-1}(r, S_{i-1})$ contains only points in $Q_{i-1}$.
Naively checking all of the points in the neighborhoods of each candidate could lead to quadratic work.
In our algorithm, we use a parallel dictionary to implement a temporary, empty grid structure with grid size equal to that of $S_{i-1}$. We insert points in $Q_{i-1}$ into this grid in parallel. This takes $O(|Q_i|)$ work and $O(\log^* m)$ depth per level.
Now for each candidate $r$, we compare the number of points in each box of $N_{i-1}(r, S_{i-1})$ to the corresponding box in the temporary grid, and if they are equal, then we know that the box only contains points in $Q_{i-1}$.

\subsection{Analysis}

Since the probability of a rebuild at each level $i$ is $|Q\cap S_i|/|S_i|$ and the work for the rebuild is $O(|S_i|)$, the expected work of rebuilding at level $i$ is $O(|Q|) = O(m)$.
Since we do at most one rebuild across all levels, it contributes $O(m)$ in expectation to the work and $O(\log (n+m) \log^*(n+m))$ \whp{} to the depth.

The total size of $Q_i$ and $up_i$ across all of the levels is proportional to the batch size. Since the point movement is the exact opposite of that of batch insertion, the proof is very similar. We omit the proof and just present the lemmas below.
\begin{lemma}\label{lemma:packing-up}
  $|\bigcup_{1\leq i\leq L} \up_i| \leq m\cdot 3^k = O(m)$
\end{lemma}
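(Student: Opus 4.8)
The plan is to mirror the proof of \cref{lemma:packing} almost verbatim, with the direction of point movement reversed. The cleanest conceptual handle is time-reversal: deleting a batch $Q$ from $S$ performs exactly the reverse point movements of inserting $Q$ into $S\setminus Q$ on every level whose grid is not rebuilt, and since \cref{alg:grid-delete} defers all rebuilds to the very end (Line~\ref{line:grid-delete:rebuild}), all of the $\up_i$ sets are generated on these unchanged grids. Under this correspondence, a point $x$ is dragged up from level $i+1$ to level $i$ by the deletion precisely when it would be dragged down from level $i$ to level $i+1$ by the reverse insertion; hence $\bigcup_{1\le i\le L}\up_i$ equals the union of the $\down$-sets of that insertion, and \cref{lemma:packing} applies directly to give $|\bigcup_i \up_i|\le m\cdot 3^k$.

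For a self-contained argument I would repeat the packing count. For $q\in Q$ and a level $i$ I would define $\up_i(q):=\{x\in\up_i : x\in N_{i+1}(q)\}$, the points of $\up_i$ sitting in $q$'s box neighborhood on level $i+1$ (this is exactly where Line~\ref{line:grid-delete:up} of \cref{alg:grid-delete} enumerates candidates for $\up_i$), and give each such $x$ the signature $\sigma\in\{-1,0,1\}^k$ of the box $b_{i+1}^{\sigma}(q)$ containing it. The goal is to show that for each fixed $q$ and each fixed $\sigma$, at most one distinct point ever appears in $\bigcup_i\up_i(q)$ with signature $\sigma$; summing the $3^k$ signatures over the $m$ points of $Q$ then yields $|\bigcup_{1\le i\le L}\up_i|\le m\cdot 3^k=O(m)$, and the overlapping-neighborhood case (two points $q,p\in Q$ whose box neighborhoods meet) is handled exactly as in \cref{lemma:packing}, using that the intersection of two partial box neighborhoods of distinct signatures holds at most one surviving point. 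To get the single-point-per-signature claim, I would let $x$ be the point of $\bigcup_i\up_i(q)$ with signature $\sigma$ whose post-deletion resting level $a$ (the coarsest level it reaches) is smallest: after the deletion $x$ is sparse on level $a$, so the partial box neighborhood $N_a^{\sigma}(q)\subseteq N_a(x)$ contains no surviving point other than $x$, and for any finer level $\ell>a$ the box $b_\ell^{\sigma}(q)$ is spatially contained in $N_a^{\sigma}(q)$ since $d_\ell\le d_a/3$; any other candidate $y$ with signature $\sigma$ passing through level $\ell$ survives the deletion and rests no coarser than $a$ by the choice of $x$, hence lies in $N_a^{\sigma}(q)$ and must equal $x$.

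The step I expect to be the main obstacle is precisely this containment direction. In \cref{lemma:packing} the grid boxes refine toward \emph{deeper} levels, which is the direction in which points move during an insertion, so an anchor point's empty \emph{pre-update} neighborhood on its departure level spatially dominates all later (finer) boxes; for deletions the boxes refine in the direction \emph{opposite} to the up-movement, so one must instead anchor on a point's \emph{post-deletion resting level} and use \emph{post-deletion} sparsity — or sidestep the issue entirely via the time-reversal reduction above. Everything else, namely the count of $3^k$ signatures, the treatment of overlapping neighborhoods, and the observation that the bound concerns only pre-rebuild movement on a fixed sequence of grids, carries over unchanged from \cref{lemma:packing}.
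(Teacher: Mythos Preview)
Your proposal is correct and aligns with the paper's approach: the paper omits this proof entirely, stating only that ``since the point movement is the exact opposite of that of batch insertion, the proof is very similar.'' Your time-reversal reduction makes that observation precise, and your self-contained packing argument (including the correct identification that one must anchor on post-deletion sparsity at the resting level rather than pre-update sparsity at the departure level) supplies exactly the details the paper elides.
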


\begin{lemma}\label{lemma:sum-q-delete}
  $\sum_{1\leq i\leq L} E[|Q_i|] = O(m)$
\end{lemma}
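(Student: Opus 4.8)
The plan is to follow the proof of Lemma~\ref{lemma:sum-q} almost verbatim, since the point movement in batch deletion is the exact mirror image of batch insertion, just traversed from level $L$ down to $1$. The first step is a structural reduction: because the sparse partition satisfies $S_{i+1}\subseteq S_i$, we have $Q_{i+1}=Q\cap S_{i+1}=(Q\cap S_i)\cap S_{i+1}=Q_i\cap S_{i+1}$, so a point $r\in Q_i$ lies in $Q_{i+1}$ if and only if $r$ survived level $i$, i.e., $r$ is non-sparse in $S_i$ (equivalently $r\notin S_i'$). Hence it suffices to bound, at each level, how many points of $Q_i$ are non-sparse in $S_i$.

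The core claim is that $E[|Q_{i+1}|]\le E[|Q_i|]/2$; summing the resulting geometric series and using $|Q_1|\le m$ then gives $\sum_{1\le i\le L}E[|Q_i|]\le 2m=O(m)$. To prove the halving I would use that the pivot $p_i$ of level $i$ is, by construction, drawn uniformly at random from $S_i$, and --- as assumed throughout --- that the deletion batch $Q$ is chosen by an adversary oblivious to the internal coin flips, so the randomness of $p_i$ is independent of $Q_i$. Ordering the points of $Q_i$ by increasing nearest-neighbor distance $d(\cdot,S_i)$, I would invoke the same quantitative relation between the grid side length $d_i/6k$ and $3^k$-box neighborhoods that underlies Lemma~\ref{lemma:rd} to show that a non-sparse point $r$ must satisfy $d(r,S_i)\le c_k\,d_i$ for a constant $c_k<1$ depending only on the dimension $k$; then argue that with probability at least $1/2$ the random value $d_i=d(p_i,S_i)$ is small enough that the larger half of $Q_i$ fails this inequality, hence is sparse, yielding $|Q_{i+1}|\le|Q_i|/2$ on that event. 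A rebuild at some level $i$ (triggered on Line~\ref{line:grid-delete:pivot} when $p_i$ or $q_i\in Q_i$) only eliminates work at deeper levels, so it cannot increase the sum, mirroring the corresponding remark in the insertion analysis.

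The step I expect to be the main obstacle is making this halving fully rigorous. In Lemma~\ref{lemma:rounds} the relabeling trick cleanly bounds $|S_{i+1}|$ because the pivot is drawn from the very set being split; here, by contrast, $p_i$ is drawn from the large set $S_i$ while we want to compare $d_i$ against a quantity (roughly the median nearest-neighbor distance within $Q_i$) determined by the much smaller subset $Q_i$, so the event "$d_i$ small" must be phrased carefully to genuinely carry probability at least $1/2$, and one must also be careful with the conditioning on "no rebuild occurs at levels $\ge i$" and on the random pivots of earlier levels. If a clean constant-probability halving turns out to be unattainable, the fallback is the weaker but still-sufficient estimate $|Q_i|\le\min(|S_i|,m)$ combined with the geometric decay of $E[|S_i|]$ established in Lemma~\ref{lemma:rounds}, which yields $\sum_i E[|Q_i|]=O(m\log(1+n/m))$ --- already within the target work bound for the overall batch-deletion algorithm.
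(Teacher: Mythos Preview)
Your proposal is essentially correct and mirrors exactly what the paper does. The paper in fact \emph{omits} the proof of this lemma entirely, stating only that ``since the point movement is the exact opposite of that of batch insertion, the proof is very similar,'' and then just states the lemma. Your plan---reduce to showing $E[|Q_{i+1}|]\le E[|Q_i|]/2$ via the random pivot and the oblivious-adversary assumption, then sum the geometric series---is precisely the argument of Lemma~\ref{lemma:sum-q} for insertion, transported to the deletion setting; your structural observation $Q_{i+1}=Q_i\cap S_{i+1}$ is the right starting point and your handling of rebuilds matches the paper's treatment.

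One remark: the concern you flag---that the pivot is drawn from $S_i$ rather than from $Q_i$, so comparing $d_i$ to the median nearest-neighbor distance \emph{within} $Q_i$ is not the clean coupon-collector step of Lemma~\ref{lemma:rounds}---is a genuine subtlety, but it is equally present (and equally glossed over) in the paper's own proof of Lemma~\ref{lemma:sum-q}. So you are not missing any idea the paper supplies; you are simply being more honest about where the argument is informal. Your fallback bound $O(m\log(1+n/m))$ is a reasonable safety net and, as you note, already suffices for the downstream work bound.
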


For the rest of the algorithm, not including the rebuild and heap update, it follows from Lemmas \ref{lemma:packing-up} and \ref{lemma:sum-q-delete} and a similar analysis to the insertion algorithm that Lines~\ref{line:grid-delete:insert}--\ref{line:grid-delete:up} take $O(m)$ amortized work in expectation and $O(\log (n+m)\log^* (n+m))$ depth across all levels. 
Therefore, we can maintain a sparse partition under a batch of $m$ deletions in $O(m)$ amortized work in expectation and $O(\log (n+m) \log^*(n+m))$ depth \whp{}, as stated in Lemma~\ref{lemma:grid-update-delete}.
We describe the cost of the heap update in Appendix~\ref{appendix:heap-update}.

\section{Maintaining the Heaps $H_i$} \label{appendix:heap-update} 
On each level $i$, we maintain a parallel min-heap $H_i$ storing the restricted distances for each point in $S_i'$. In this section, we elaborate on the \textsc{HeapUpdate} procedure on Line~\ref{line:grid-insert-short:heap} of Algorithm~\ref{alg:grid-insert-short} and Line~\ref{line:grid-delete:heap} of Algorithm~\ref{alg:grid-delete}. Each call to \textsc{HeapUpdate}$(i)$ updates $H_{i+l}$ for $0\leq l \leq k$ so that they contain the updated restricted distances in level $i$. 

\subsection{Definitions}
Here we define some terms that we use in the algorithm description and analysis. 

During a batch insertion, we process each level $i$ with inputs $Q_{i}$ and $\down_{i}$ (Algorithm~\ref{alg:grid-insert-short}).
By definition, $\down_{i}$ contains the points moved from level $i-1$ to levels $i$ and greater.
We say that point $x$ \defn{starts moving} at level $i$ if $x\in \down_{i+1} \setminus \down_{i}$. We say that point $x$ \defn{stops moving} at level $i$ if $x\in \down_{i}\setminus \down_{i+1}$, or if point $x\in Q_{i}\setminus Q_{i+1}$, i.e., $x$ is sparse and stays in $S_i'$.
Finally, point $x$ \defn{moves through} level $i$ if it is in $\down_{i}\cap \down_{i+1}$. 


During a batch deletion, we process each level $i$ with input $\up_{i}$, and delete points in $Q$ from the level if they exist (Algorithm~\ref{alg:grid-delete}).
Similar to insertion, $up_{i}$ contains the points moved from level $i+1$ to levels $i$ and less.
We say that point $x$ starts moving at level $i$ if $x\in up_{i-1}\setminus up_{i}$; or if $x\in Q$ is deleted from $S_i'$.
We say that point $x$ stops moving at level $i$ if $x\in up_{i}\setminus up_{i-1}$.
Finally, we say that point $x$ moves through level $i$ if it is in $up_{i}\cap up_{i-1}$.


\begin{algorithm}[t]
\SetAlgoLined
  \SetKwInOut{Input}{Input}
  \SetKwInOut{Output}{Output}
  \Input{$(S_i,S_i',p_i,q_i,d_i)$ and $H_i$ with updated grids; point set $M_1$ that start moving at level $i$ and point set $M_2$ that stop moving at level $i$.
  }

  \SetKwFunction{algo}{algo}\SetKwFunction{proc}{proc}
  \SetKwProg{myalg}{Algorithm}{}{}
  \SetKwProg{myproc}{Procedure}{}{}

  \myalg{\upshape\textsc{HeapUpdate-Naive}($i$)}{\label{line:heap-naive:proc}
    Batch delete $d_i^*(p)\ \forall p \in M_1$ from $H_i$.\label{line:heap-naive:del}
    
    \For{$0\leq l\leq k$}{\label{line:heap-naive:for1-start}
      Batch delete $d_{i+l}^*(q)$ from $H_{i+l}$ such that $d^*_{i+l}(q)=d(q,p)$ for some $p \in  M_1$.\label{line:heap-naive:for1-del}
      
      In parallel, recompute $d_{i+l}^*(q)$ using the grid of ${S}_{i+l}$ for all $q$ whose old $d_{i+l}^*(q)$ was just deleted.\label{line:heap-naive:for1-recomp}
      
      Batch insert new $d_{i+l}^*(q)$ for all $q$ into $H_{i+l}$.\label{line:heap-naive:for1-insert-end}
    }
    
    Compute and batch insert $d_i^*(p)\ \forall p\in M_2$ into $H_i$.\label{line:heap-naive:insert} 
    
    \For{$0\leq l\leq k$}{\label{line:heap-naive:for2-start}
      Batch delete from $H_i$ the $d^*_{i+l}(q)$ for each point $q\in {S}_{i+l}'$, if $d(q,p)< d^*_{i+l}(q)$ for some $p\in M_2$.\label{line:heap-naive:for2-del}
      
      Batch insert into $H_i$ the new $d^*_{i+l}(q):=d(q,p)$ for each aforementioned point $q$ on the previous line.\label{line:heap-naive:for2-insert-end}
    }
  }
  \caption{Naive Heap Update 
  }\label{alg:heap-update-naive}
\end{algorithm}

\subsection{Parallel Update} 
The heap $H_i$ contains the restricted distance $d_i^*(q)$ for $q\in S_i'$. By definition, $d_i^*(q)$ is the closest distance of $q$ to another point in $S_{i-l}'$ where $0\leq l\leq k$ ($k$ is the dimension of the data set). Therefore, following an update on $S_i'$, we need to update the $d_i^*(q)$ in $H_{i+l}$ for $0\leq l\leq k$, and $q\in S_{i+l}'$.
Specifically, the update happens when $d_i^*(q)=d(q,p)$, but $p$ starts moving at level $i$; or when $p$ stops moving at level $i$ and $d(q,p)<d_i^*(q)$.
Since the update of $S_i'$ initiates the update on some heap $H_{i+l}$ for $0\leq l\leq k$, we call level $i$ the \defn{initiator} and each 
heap $H_{i+l}$ a \defn{receptor} of the initiator.

We first start with a more intuitive but less parallel algorithm, which is shown in Algorithm~\ref{alg:heap-update-naive}. 
It takes as input the updated sparse partition $(S_i,S_i',p_i,q_i,d_i)$, the set of points $M_1$ that start moving at level $i$, and a set of points $M_2$ that stop moving at level $i$. 
Lines \ref{line:heap-naive:del}--\ref{line:heap-naive:for1-insert-end} process the set of points $M_1$. We first batch delete $d_i^*(p)$ from $H_i$ for all $p$ in $M_1$, since they start moving at level $i$ (Line \ref{line:heap-naive:del}). 
We then update each receptor heap if it stores some $d_i^*(q)$ that is generated by a deleted point $p\in M_1$. To know each potential point $q$, we iterate over the neighborhood of each $p$ in $M_1$ and then check if $d_i^*(q)$ needs to be updated (Lines \ref{line:heap-naive:for1-start}--\ref{line:heap-naive:for1-insert-end}). 
Lines \ref{line:heap-naive:insert}--\ref{line:heap-naive:for2-insert-end} process $M_2$. We compute new restricted distances and batch insert the points in $M_2$ into $H_i$, since they stop moving at level $i$ (Line \ref{line:heap-naive:insert}). 
We then update the receptor heaps when a heap contains the restricted distance of point $q$, but $q$ has a smaller distance to a newly inserted $p\in M_2$ than to its previous closest point (Lines \ref{line:heap-naive:for2-start}--\ref{line:heap-naive:for2-insert-end}).

\begin{figure}[t]
\centering
\begin{minipage}[t]{0.49\textwidth}
  \includegraphics[width=\textwidth]{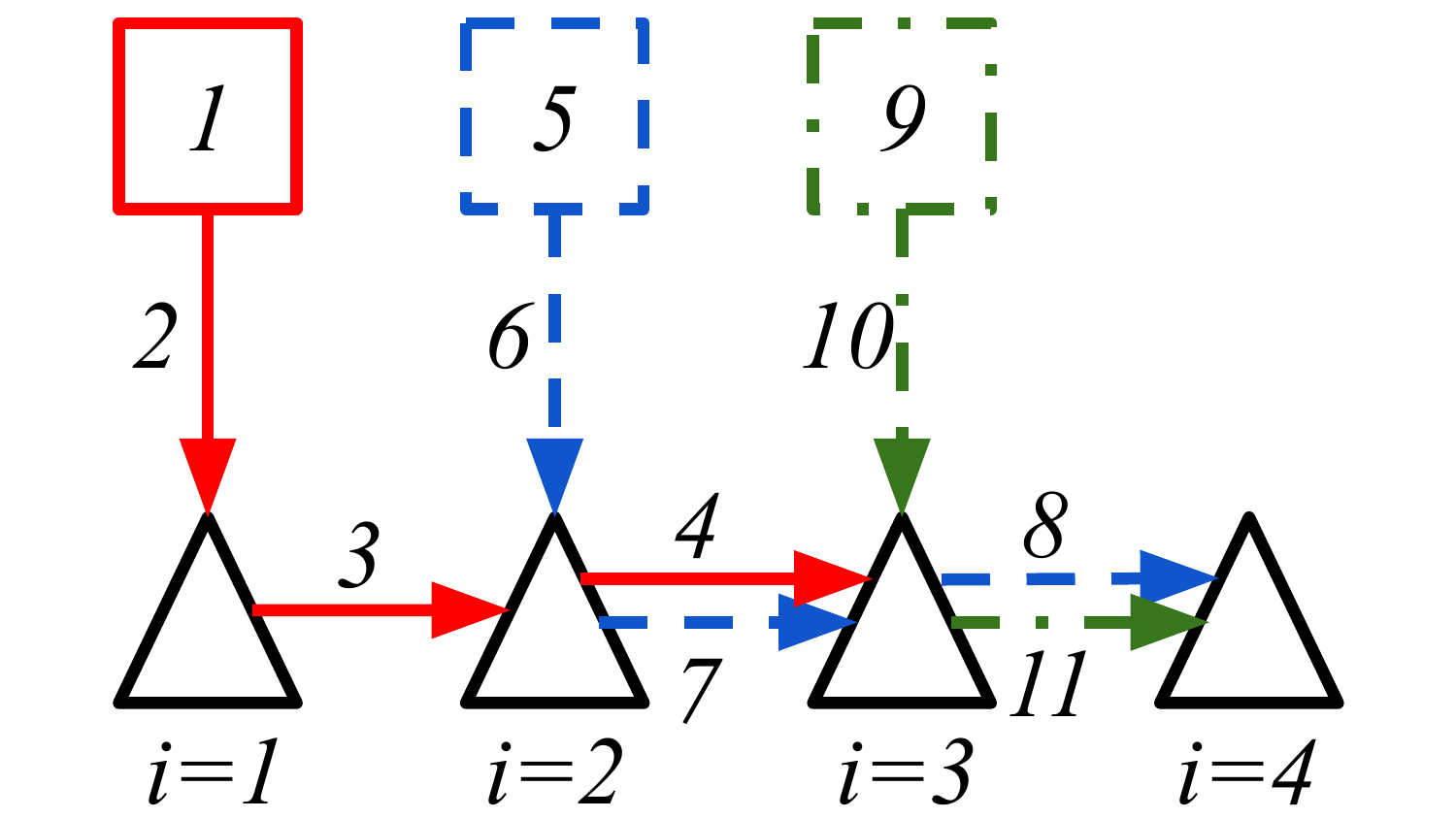}
  \subcaption{Naive insertion.}\label{fig:heap-insert-naive}
\end{minipage} %
\begin{minipage}[t]{0.49\textwidth}
  \includegraphics[width=\textwidth]{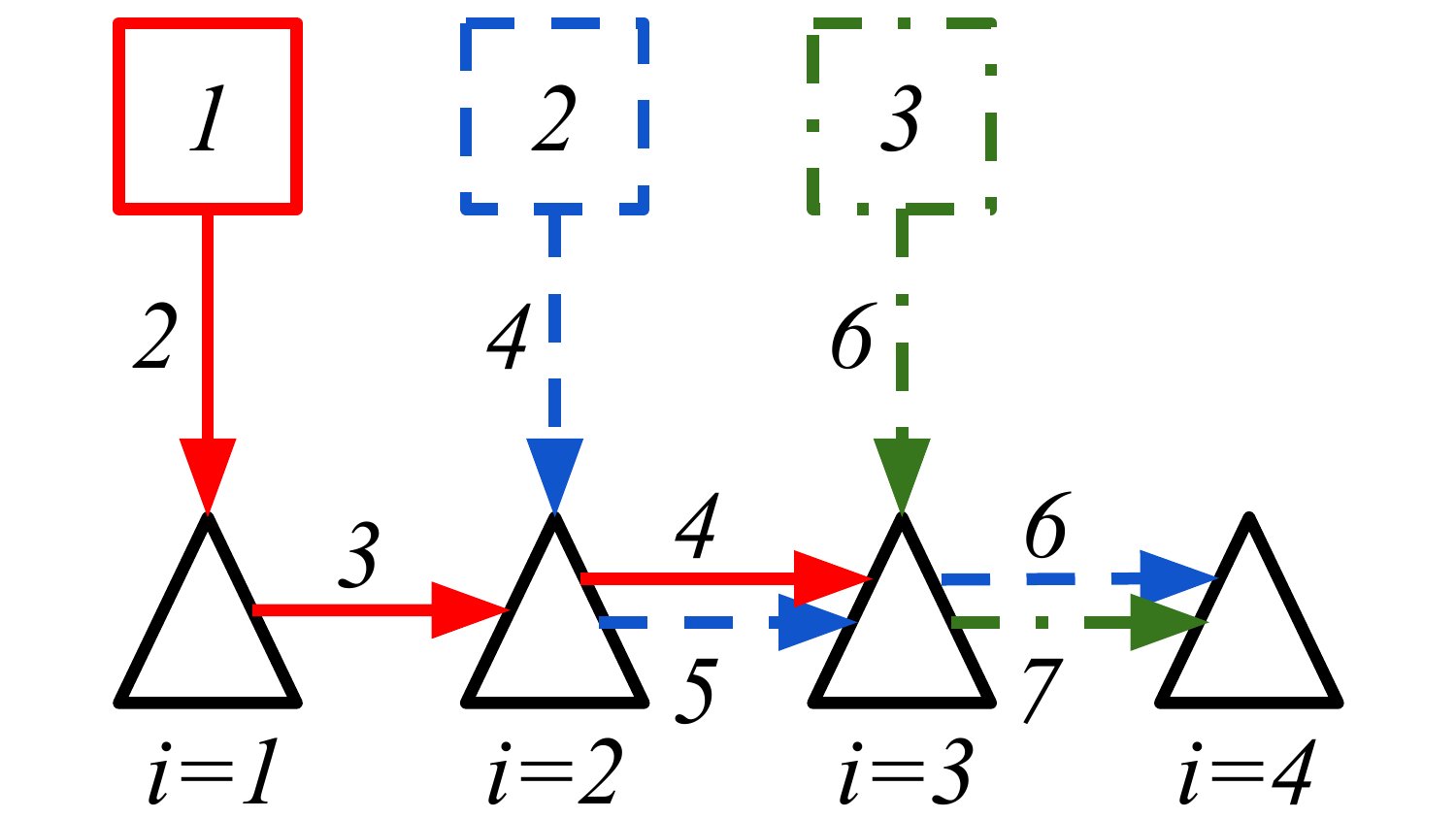}
  \subcaption{Parallel insertion.}\label{fig:heap-insert-parallel}
\end{minipage} %
\begin{minipage}[t]{0.49\textwidth}
  \includegraphics[width=\textwidth]{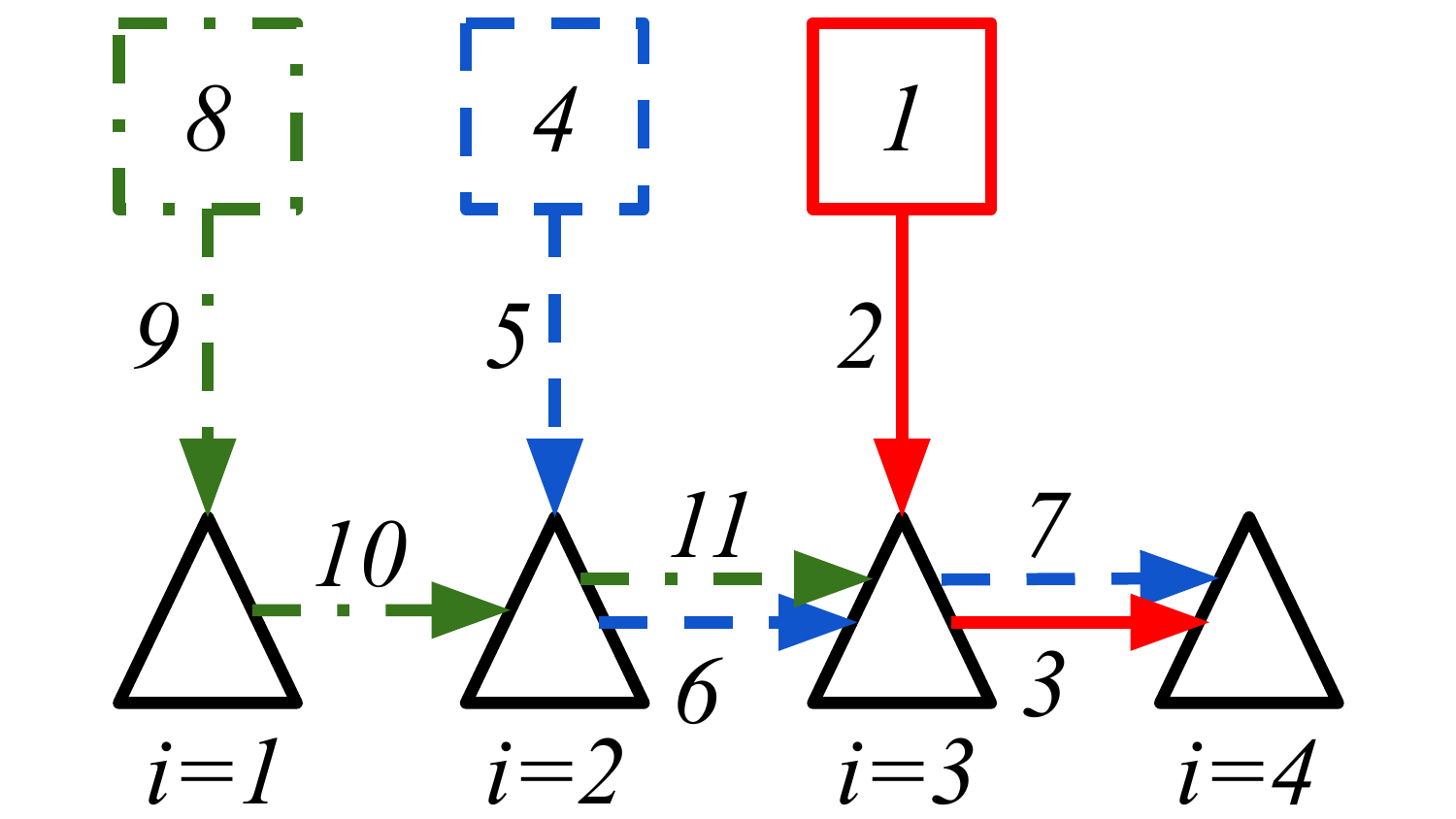}
  \subcaption{Naive deletion.}\label{fig:heap-delete-naive}
\end{minipage} %
\begin{minipage}[t]{0.49\textwidth}
  \includegraphics[width=\textwidth]{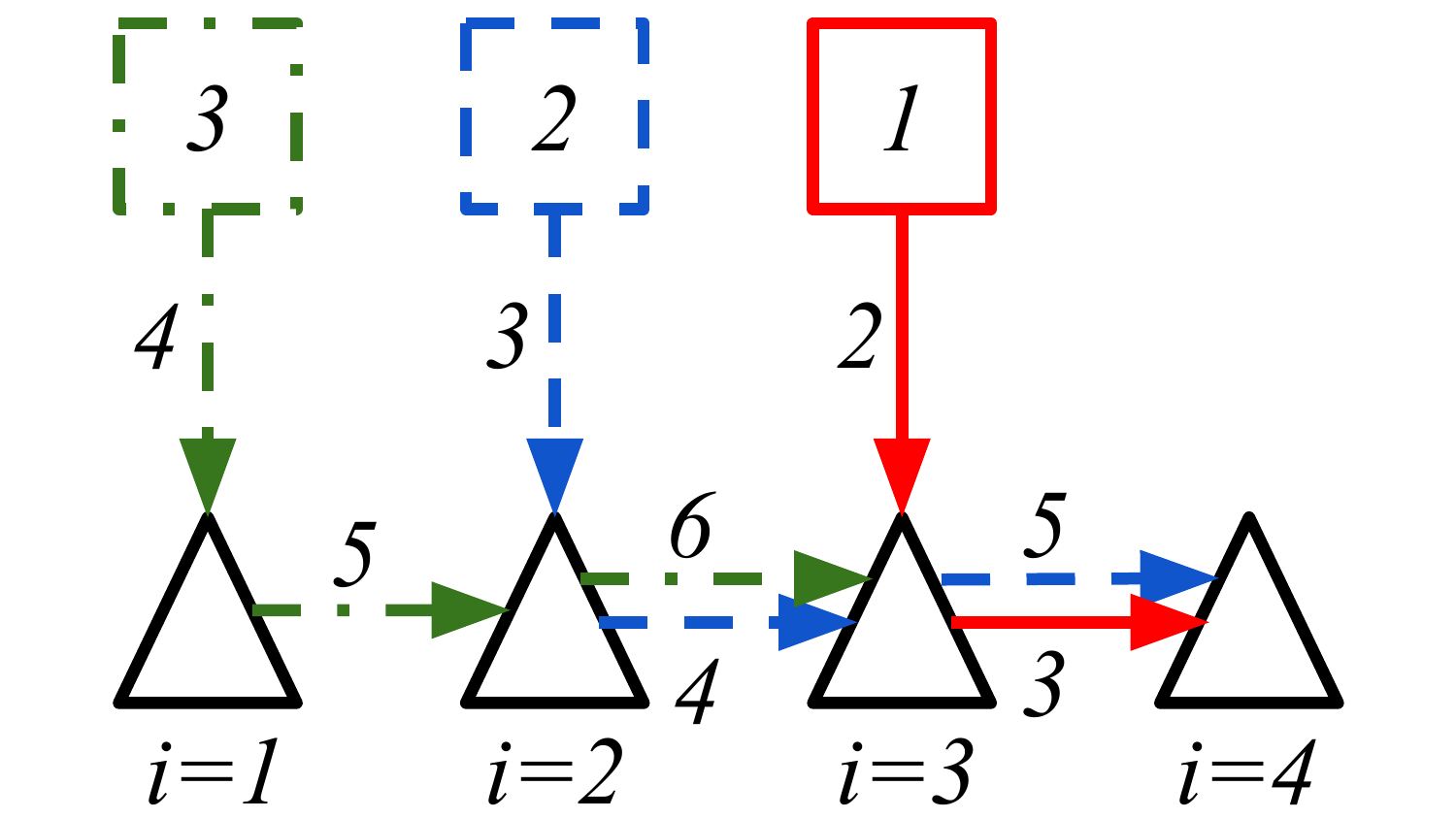}
  \subcaption{Parallel deletion.}\label{fig:heap-delete-parallel}
\end{minipage} %
\caption{This figure shows examples of how heap updates work during insertion and deletion. Calls to \textsc{GridInsert} and \textsc{GridDelete} are shown by the boxes. Calls to \textsc{HeapUpdate} are shown by the arrows, whereas the actual heaps $H_i$ are shown by the triangles. The example considers dimension $k=2$ and shows $H_i$ for $i=1,2,3,4$, but considers only updating levels $i=1,2,3$. We number the calls by the order that they happen, and two calls have the same number if they can be done in parallel. For clarity, we annotate the operations associated with different levels in different colors and line-styles.}\label{fig:heap-update}
\end{figure}

Using our batch-parallel binary heap, which we will describe in detail in Appendix~\ref{appendix:heap}, each batch update of the heap takes $O(\log (n+m))$ depth. The computation of the new restricted distances takes $O(1)$ depth. Therefore, the naive heap update algorithm takes $O(\log (n+m))$ depth per call. 
For the batch insertion algorithm in Algorithm~\ref{alg:grid-insert-short},
\textsc{GridInsert} on level $i+1$ is blocked by \textsc{HeapUpdate-Naive} of level $i$, to prevent multiple initiators updating the same receptor simultaneously.
Figure~\ref{fig:heap-insert-naive} illustrates four levels of the data structure. The updates of level 2 shown in blue dashed lines are blocked until the completion of level 1 shown in red solid lines, and similarly for the remaining levels.
Since there are $O(\log (n+m))$ levels \whp{}, this leads to a overall depth of $O(\log^2 (n+m))$ \whp{} for the heap updates. A similar argument applies for the batch deletion algorithm,
except that the order that the levels are updated proceeds in descending value of $i$ starting with $L$, as shown in Figure~\ref{fig:heap-delete-naive}.

We improve the depth of the heap update to $O(\log (n+m))$ \whp{} by pipelining the heap updates. For insertion, a crucial observation is that the grid update of level $i+1$ can start right after that of $i$. Meanwhile, we run the heap updates in lock-step in parallel with the grid updates, as illustrated in Figure~\ref{fig:heap-insert-parallel}---for each heap, the update from level $i+1$ can start right after that of level $i$ is completed.
For example, the update on $H_2$ initiated by level 2 (blue dashed arrow with a value of 4) can start right after the update initiated by level 1 (red arrow with a value of 3).
Like insertion, deletion follows similar strategy in the reverse order, as illustrated in Figure~\ref{fig:heap-delete-parallel}.

\subsection{Analysis (Theorem~\ref{thm:overall})}\label{appendix:overall-analysis}

All updates on the heaps in our data structure are a result of points that start or stop moving at some level. First, we are concerned with those added to or deleted from $S_i'$, and hence $H_i$.
Since the $S_i'$ for $1\leq i\leq L$ are disjoint sets, $O(m)$ points from $Q$ are inserted or deleted from $H_i$ across all $1\leq i\leq L$.
Second, points in $\down_i$ and $\up_i$ for $1\leq i\leq L$ also cause heap updates, and the total number of heap updates from these points is $O(m)$ by Lemmas~\ref{lemma:packing} and~\ref{lemma:packing-up}. 
Therefore, across all levels, there are $O(m)$ updates to the heap.
For all $p$ that start or stop moving across all levels, finding and recomputing $d_{i+l}^*(q)$ for $0\leq l\leq k$ takes $O(m)$ work and $O(1)$ depth. This is because the cost associated with each $q$ whose $d^*(q)$ depends on $p$ involves searching the constant number of boxes surrounding $q$ and $p$ in $k$ sparse sets, where $k$ is constant, as shown by Golin et al.~\cite{golin1998randomized}.
In \cref{appendix:heap}, we will show that the total work for a batch of $r$ updates to our parallel heap is $O(r(1+\log ((n+r)/r)))$.
Therefore, the total work for heap operations is hence $O(m(1+\log ((n+m)/m)))$.

In \cref{appendix:heap}, we will show that batch insertions and deletions on a heap take $O(\log (n+m))$ depth.
Since the heap updates are performed in parallel with the grid updates, they 
are not on the critical path of the computation. The total depth is dominated by the grid updates, which is $O(\log (n+m)\log^*(n+m))$ \whp{}.

This leads to our overall bounds of amortized $O(m(1+\log ((n+m)/m)))$ work and $O(\log (n+m) \log^*(n+m))$ depth \whp{} for an update of size $m$, where the amortization comes from resizing the parallel dictionaries.
\section{Parallel Batch-Dynamic Binary Heap}\label{appendix:heap}

Our parallel heap that supports batch updates (inserts and deletes) and finding the minimum element (find-min) efficiently is crucial to our data structure for the closest pair.
Although we could implement a parallel heap using a parallel binary search tree, which supports a batch of $m$ updates to a set of $n$ elements in $O(m\log (n+m))$ work and $O(\log (n+m))$ depth~\cite{CLRS}, it supports more functionality (i.e., returning the minimum $K$ elements) than we need. In fact, the $O(m\log (n+m))$ work bound is tight for a binary search tree, since we can use it for comparison sorting.

For our batch-dynamic heap introduced in Section~\ref{sec:heap}, it only needs to support the find-min operation rather than maintaining the full ordering, and hence it has a better work bound of $O(m(1+\log((n+m)/m)))$. 
Furthermore, it allows us to construct the initial heap in linear work (by setting $m$ to the number of points and $n=0$ in the work bound), as needed for Theorem~\ref{theorem:build}. The pseudocode of our algorithm is shown in \cref{alg:heap}, and its discussion is in Section~\ref{sec:heap}.  We present the analysis in this section.

\begin{algorithm}[t]
  \caption{Parallel \heapify algorithm}\label{alg:heap}
  \SetAlgoLined
  \SetKwInOut{Input}{Input}
  \SetKwInOut{Output}{Output}
  \Input{A binary min-heap of size $n$ with $m$ updates, each of which is a triple $(v_i, k_i, k_i')$, indicating to update key $k_i$ to $k'_i$ on node $v_i$.}
  \Output{An updated binary heap.} 

  \SetKwFunction{algo}{algo}
  \SetKwFunction{proc}{proc}
  \SetKwProg{myalg}{Algorithm}{}{}
  \SetKwProg{myproc}{Procedure}{}{}

  Let $S^+$ be the set of nodes with keys to be increased.\label{line:down-start}
  
  Use integer sort to group the nodes in $S^+$ to $S^+_l$ by the level $l$ in the heap (the root has level 0).\label{line:heap-sort-1}

  \For {\upshape $l\gets \lfloor \log_2 n\rfloor-1$ to $0$} {
      \For {$v_i\in S^+_l$ \textbf{in parallel}} {
        \textsc{Down-Heap}($v_i$)\label{line:down-end}
      }
  }
  Let $S^-$ be the set of nodes with keys to be decreased.\label{line:up-start}
  
  Use integer sort to group the nodes in $S^-$ to $S^-_l$ by the level $l$ in the heap.\label{line:heap-sort-2}
  
  \For {\upshape $l\gets 1$ to $\lfloor \log_2 n\rfloor$} {
      \For {$v_i\in S^-_l$ \textbf{in parallel}} {
        \textsc{Up-Heap}($v_i$)\label{line:up-end}
      }
  }
\end{algorithm}

\subsection{Analysis of the \heapify Algorithm}

\myparagraph{Correctness}
The correctness can be shown inductively on subtrees of increasing height.
For the base case, all leaf nodes are valid binary heap subtrees, each containing one node.
Then on the first iteration, we run \textsc{Down-Heap} for updated keys on the second to last level.
If the increased keys violate the heap property, then \textsc{Down-Heap} will heapify this subtree, which has two levels.
Similarly, for each node $v$ with increased keys on level $i$, both of $v$'s childrens' subtrees are valid binary heap subtrees, and so after \textsc{Down-Heap}, the subtree rooted at $v$ is a valid binary heap subtree.
The correctness for \textsc{Up-Heap} can be shown symmetrically. 
The main difference is that in \textsc{Up-Heap}, the update paths can overlap, but the correctness is guaranteed since it is implemented in a round-synchronous manner.
In addition, when both children are updating a parent, we only allow the smaller child to continue its update path, while terminating that of the larger child, thereby satisfying the heap property.

\myparagraph{Work}
We now consider the work of this algorithm.
Let $h$ be the height of the binary heap.
For the worst case analysis, we always assume that \textsc{Down-Heap} pushes a node to the leaf and that \textsc{Up-Heap} pushes a node to the root.
The case for \textsc{Down-Heap} is simple---for $m=2^r-1$ increase-keys, the worst case is when they are in the top $r$ levels.
Each \textsc{Down-Heap} is independent and the total work is 
$$\sum_{i=0}^{r}{2^i(h-i)}=O(m(h-r+1))=O\left(m\left(1+\log\left(\frac{n}{m}\right)\right)\right).$$ 

The work for \textsc{Up-Heap} is more involved. 
Let $m_i$ be the number of increase-keys on level $i$. We know that $m_i\le 2^i$ and $\sum{m_i}\le m$. 
For level $i$, the work for all calls to \textsc{Up-Heap} is upper bounded by the number of nodes on the path from the root to all updated nodes in level $i$.
It can be shown that the number of such nodes is $O\left(m_i\left(1+\log \frac{n}{2^{h-i}m_i}\right)\right)$ (Theorem 6 in~\cite{blelloch2016just}).
Hence, the overall work for all levels is $W=O\left(\sum_{i=0}^{\log_2 n}{m_i\left(1+\log \frac{n}{2^{h-i}m_i}\right)}\right)$.
Let $m'=\sum_i m_i$, and we know that $m'\le m$.
To bound the  work, we consider the maximum value of $W$ for any given $m'$.
We can use the method of Lagrange multipliers, 
and compute the partial derivative of $m_i$ (without the big-$O$), which solves to

\begin{align*}
\frac{\partial}{\partial m_i}W
=\frac{\partial}{\partial m_i}\left(m_i\left(1-\log_2 \frac{2^{h-i}m_i}{n}\right)\right)
=\log_2 \frac{n}{2^{h-i}m_i}-\frac{1}{\ln 2}+1.
\end{align*}

Since the constraint for $\sum{m_i}$ is linear, $W$ is maximized when $\frac{\partial}{\partial m_i}W=\frac{\partial}{\partial m_j}W$ for all levels $0\le i,j\le h$, which solves to $m_i=cm'/2^{h-i+1}$ for some value of $c$ such that $m'=\sum m_i$. Note that $1<c<2$. 
Plugging this in gives
\begin{align*}
W&=O\left(\sum_{i=0}^{\log_2 n}\frac{m'}{2^{h-i+1}}\left(1+\log \frac{n}{2^{h-i} (cm'/2^{h-i+1})}\right)\right)\\
&=O\left(\left(\sum_{i=0}^{\log_2 n}\frac{m'}{2^{h-i+1}}\right)\left(1+\log \left(\frac{n}{m'}\right)\right)\right)=O\left(m\left(1+\log \left(\frac{n}{m}\right)\right)\right).
\end{align*}

In addition to \textsc{Down-Heap} and \textsc{Up-Heap}, we also need to integer sort the updates on Lines \ref{line:heap-sort-1} and~\ref{line:heap-sort-2}, which takes $O(m)$ work.
Hence, the total work for \cref{alg:heap} is $O\left(m\left(1+\log \left(\frac{n}{m}\right)\right)\right)$. 

\myparagraph{Depth}
The integer sort on Lines \ref{line:heap-sort-1} and~\ref{line:heap-sort-2} takes $O(\log m)$ depth.
Directly running \cref{alg:heap} gives $O(\log^2 n)$ depth---there are $O(\log n)$ tree levels, and on each level, \textsc{Up-Heap} or \textsc{Down-Heap} requires $O(\log n)$ depth.
We can improve the depth bound to $O(\log n)$ using pipelining, as discussed in \cref{sec:heap}.
The \textsc{Up-Heap} and \textsc{Down-Heap} calls at all levels will have begun by the $i$'th round, and each call takes $O(\log n)$ rounds to finish. Each round takes $O(1)$ depth, and so the overall depth is $O(\log n)$. The pipelining does not increase the work.

Combining the work and depth analysis gives the bounds in \cref{thm:heap}.


\section{Correctness of the Simplified Data Structure}\label{sec:appendix-simplified}
Our implementation of the parallel batch-dynamic data structure maintains just a single heap $H^*$ for the last $\lceil \log_3 {2\sqrt{k}} \rceil$ levels. 
For correctness, we prove that any point pair $(a,b)$ where $a,b\in S\setminus S_{j}, \forall j\leq L-\lceil \log_3 {2\sqrt{k}} \rceil$ cannot give rise to the closest pair distance, which we denote as $\delta(S)$. 

\begin{lemma}\label{lemma:simplify}
$\delta(S) < d(a,b)$ for any $a,b\in S\setminus S_{j}, \forall j\leq L-\lceil \log_3 {2\sqrt{k}} \rceil$.
\end{lemma}

\begin{figure}[t]
\center
    \includegraphics[width=0.2\textwidth]{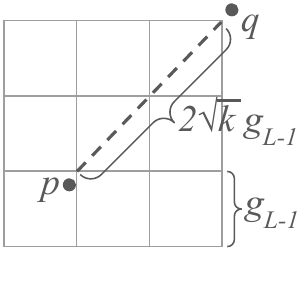}
  \caption{Illustration for the proof of Lemma~\ref{lemma:simplify}: Let $\delta(S)$ be $p$ and $q$. Suppose by contradiction, $d(p,q) > 2\sqrt{k}g_{L-1}$, where $k=2$ in this example. Then $p$ and $q$ will be sparse in level $L-1$, hence all the points will be sparse in level $L-1$.}\label{fig:heapproof}
\vspace{1em}
\end{figure}

\begin{proof}
The side length of the grid $G_i$ at level $i$ is $g_i=d_i/6k$, as defined earlier.
Without loss of generality, consider level $j$, and let $a$ and $b$ be two points such that $a,b\in S_{j-1} \setminus S_j$ (this can easily be generalized to $a$ and $b$ being sparse on different levels that are both $< j$).
We know
$d(a,b)>g_{j-1}\geq 3 g_{j}$ by properties of the sparse partition (Section~\ref{sec:golin:sp}).
On the other hand, we also know that $\delta(S) \leq 2\sqrt{k} g_{L-1}$
, since otherwise, the pair that defines $\delta(S)$ would have been sparse in level $L-1$, and the last level $L$ would not have existed, which is a contradiction (see Figure~\ref{fig:heapproof}).
Given the property of the sparse partition,
we have $3 g_{j} \geq 2\sqrt{k} g_{L-1}$ for all $j\leq L-\lceil \log_3 {2\sqrt{k}} \rceil$. We can verify that this is true as $3 \cdot g_{L-\lceil \log_3 {2\sqrt{k}} \rceil} \geq 3 \cdot 3^{\lceil \log_3 {2\sqrt{k}} \rceil} \cdot g_L \geq 3^{\lceil \log_3 {2\sqrt{k}} \rceil} \cdot g_{L-1} \geq 2\sqrt{k} \cdot g_{L-1}$.
Therefore, $d(a,b)>\delta(S)$.
\end{proof}

Since $a$ and $b$ satisfying Lemma~\ref{lemma:simplify} are both sparse in some $S_h$ where $h<L-\lceil \log_3 {2\sqrt{k}} \rceil$, it follows that $d(a,q)>d(q,S_{L-\lceil \log_3 {2\sqrt{k}} \rceil})$ and $d(b,q)>d(q,S_{L-\lceil \log_3 {2\sqrt{k}} \rceil})$ for any $q\in S_{L-\lceil \log_3 {2\sqrt{k}} \rceil}$.
Therefore, the closest pair distance $\delta(S) = d(p,q)$ for some $p,q\in S_{L-\lceil \log_3 {2\sqrt{k}} \rceil}$, and will not involve $a$ or $b$.

\section{Static Algorithms and Implementations}\label{appendix:static}

In addition to our batch-dynamic closest pair algorithm, 
we implement several parallel algorithms for the static closest pair problem, which we describe in this section. We evaluate all of them against each other, and compare them to our parallel batch-dynamic algorithm in Section~\ref{sec:experiment}. As far as we know, this paper presents the first experimental study of parallel algorithms for the static closest pair problem.

\subsection{Divide-and-Conquer Algorithm}
The first divide-and-conquer algorithm for closest-pair was introduced by Bentley~\cite{bentley1975multidimensional},
and has $O(n\log n)$ work and is optimal in the algebraic decision tree model. Blelloch and Maggs~\cite{blelloch2010parallel} parallelize this algorithm, and their algorithm takes $O(n\log n)$ work and $O(\log^2 n)$ depth. Atallah and Goodrich~\cite{atallah1986efficient} present another parallel algorithm based on multi-way divide-and-conquer, which takes $O(n\log n\log\log n)$ work and $O(\log n\log\log n)$ depth. 

We implement the divide-and-conquer algorithm by Blelloch and Maggs~\cite{blelloch2010parallel}. The main idea of the algorithm is to divide the space containing all the points $S$ along an axis-aligned hyperplane by the median point along a dimension fixed throughout the algorithm, to form left and right subproblems. We then recursively find the closest pair in each of the two subproblems in parallel to obtain results $\delta_L$ and $\delta_R$. Then, we merge the two subproblems, and consider the points near the median point, which are the points within a distance of $\min\{\delta_L, \delta_R\}$ from the median point. We call the set of such points a \defn{central slab}, and use an efficient ``boundary merging'' technique to obtain $\delta_M$.
The closest pair will have distance $\delta(S)=\min\{\delta_L, \delta_R, \delta_M\}$. Finding the median and performing the merge can be done using standard parallel primitives.

Blelloch and Maggs'~\cite{blelloch2010parallel} parallel algorithm requires the central slab to be sorted in a dimension $d$ different from the dimension that is used to divide the problem. Their algorithm orders the points along $d$ by performing recursive partitioning and merging at each level of the divide-and-conquer algorithm. Since the central slab can be linear in size, the merging algorithm is efficient in theory.
However, we find that the central slab is very small for inputs that arise in practice. Therefore, in our algorithm, we simply use comparison sorting to sort the central slab when needed
without using partitioning and merging, which results in better performance in practice. We also coarsen the base case, and switch to a quadratic-work brute-force algorithm when the subproblem size is sufficiently small.

\subsection{Rabin's Algorithm}
Rabin's algorithm~\cite{rabin1976probabilistic} is the first randomized sequential algorithm for the problem. Assuming a unit-cost floor function, Rabin's algorithm has $O(n)$ expected work.
MacKenzie and Stout~\cite{mackenzie1998ultrafast} design a parallel algorithm based on Rabin's algorithm, and achieve $O(n)$ work and $O(1)$ depth in expectation.
Specifically, the algorithm first takes a random sample of $n^{0.9}$ points, and finds the closest pair on this sample recursively. Then it forms a grid structure with side length equal to the closest pair distance. Each box along with its points are classified into sparse or dense based its point count.
The pairwise distances of the dense points (fewer than $n^{0.4}$) are computed using a quadratic work algorithm, after which the minimum is taken. Then, all points find their closest sparse points by checking neighboring sparse boxes in parallel, after which the minimum of these distances are taken. The algorithm uses some parallel primitives with high constant factor overheads, and are unlikely to be practical. 

We design a simpler parallel version of Rabin's algorithm.
Our algorithm takes a sample of $n^c$ points where $c<1$, and recursively computes the closest distance $\delta'$ of the sample. Then, we construct a grid structure on all of the points $S$ using a parallel dictionary, where the box size is set to $\delta'$. For each point $x\in S$, we find its closest point by exploring $N(x, S)$, and then take the minimum among that of all $x$ to obtain $\delta(S)$.
In terms of work, MacKenzie and Stout~\cite{mackenzie1998ultrafast} showed by recursively finding the closest pair on a sample of size $n^c$, the total work is $O(n)$ in expectation. We find $c=0.8$ to work well in practice.
In terms of depth, our implementation has $O(\log n)$ levels of recursion, each taking $O(\log^* n)$ depth \whp{}, which includes parallel dictionary operations and finding the minimum in parallel. The total depth is $O(\log n\log^*n)$ \whp{}.
In the recursion, we coarsen the base case by switching to a brute-force algorithm when the problem is sufficiently small.

\subsection{Sieve Algorithm}
Khuller and Matias~\cite{khuller1995simple} propose a simple sequential algorithm called the sieve algorithm that takes $O(n)$ expected work (the dynamic algorithm by Golin et al.~\cite{golin1998randomized} is based on the sieve algorithm).
The algorithm proceeds in rounds, where in round $i$, it chooses a random point $x$ from the point set $S_i$ (where $S_1=S$) and computes $d_i(x)$, the distance to its closest neighbor. Then, the algorithm constructs a grid structure on $S_i$, where each box has a side length of $d_i(x)$.
It then moves the points that are sparse in $S_i$ into a new set $S_{i+1}$, and proceeds to the next round, until $S_{i+1}$ is empty.
Finally, the algorithm constructs a grid structure on $S$ with boxes of size equal to the smallest box computed during the algorithm. For each point $x\in S$, we compute its closest neighbor using the grid by traversing its own box and the boxes bordering on it. Finally, we take the minimum among distances obtained by all the points to obtain $\delta(S)$.

The sequential algorithm takes $O(n)$ expected work as the number of points decreases geometrically from one level to the next.
We obtain a parallel sieve algorithm by using our parallel construction for the sparse partition in Algorithm~\ref{alg:build-short}, but without the heap.  Our parallel sieve algorithm takes $O(n)$ expected work and $O(\log n\log^* n)$ depth \whp.

\subsection{Incremental Algorithm}
Golin and Raman~\cite{golin1992simple} present a sequential incremental algorithm for closest pair with $O(n)$ expected work. 
Blelloch et al.~\cite{BGSS16} present a parallel version of this incremental algorithm, which we implement. 
The parallel algorithm works by maintaining a grid using a dictionary, and inserting the points in a randomized order in batches of exponentially increasing size. The side length of the grid box is the current closest pair distance, which is initialized to the distance between the first two points in the randomized ordering. For the $i$'th point inserted, the algorithm will check its neighborhood for a neighbor with distance smaller than the current grid side length. When such a neighbor is found, the algorithm rebuilds the grid for the first $i$ points using the new side length, and continues with the insertion. Since the parallel algorithm inserts points in batches, for each batch we find the earliest point $i$ remaining in the batch that causes a grid rebuild, perform the rebuild on all points up to and including $i$, remove these points from the batch, and repeat until the batch is empty.
After all batches are processed, the pair whose distance gives rise to the final grid side length is the closest pair. The algorithm takes $O(n)$ expected work and $O(\log n\log^* n)$ depth \whp.

\section{Experimental Data}\label{appendix:experiment}

\cref{table:throughput} shows the throughput of our parallel batch-dynamic algorithm on 1 thread and 48 cores with hyper-threading, as well as that of our sequential implementation, under varying batch sizes.

\begin{table}[]
\footnotesize
\setlength{\tabcolsep}{1.5pt}
\begin{tabular}{@{}c@{ }c|ccc|ccc|ccc@{}}
Batch Sizes                      &           & \multicolumn{3}{c|}{$10^2$}         & \multicolumn{3}{c|}{$10^3$}         & \multicolumn{3}{c}{$10^4$}           \\
                                 &           & Seq        & 1t         & 48h        & Seq        & 1t         & 48h        & Seq         & 1t          & 48h        \\ \hline
\multirow{2}{*}{2D-Uniform-10M}  & Ins & $2.75\cdot 10^5$ & $2.55\cdot 10^5$ & $1.40\cdot 10^5$ & $3.32\cdot 10^5$ & $3.18\cdot 10^5$ & $5.66\cdot 10^5$ & $2.44\cdot 10^5$  & $2.35\cdot 10^5$  & $1.66\cdot 10^6$ \\
                                 & Del  & $1.52\cdot 10^5$ & $1.52\cdot 10^5$ & $9.35\cdot 10^4$ & $1.56\cdot 10^5$ & $1.67\cdot 10^5$ & $3.09\cdot 10^5$ & $1.45\cdot 10^5$  & $1.55\cdot 10^5$  & $9.18\cdot 10^5$ \\
\multirow{2}{*}{3D-Uniform-10M}  & Ins & $9.00\cdot 10^4$ & $7.42\cdot 10^4$ & $6.13\cdot 10^4$ & $8.52\cdot 10^4$ & $7.24\cdot 10^4$ & $2.99\cdot 10^5$ & $7.53\cdot 10^4$  & $6.37\cdot 10^4$  & $9.52\cdot 10^5$ \\
                                 & Del  & $5.40\cdot 10^4$ & $4.77\cdot 10^4$ & $3.81\cdot 10^4$ & $5.54\cdot 10^4$ & $5.06\cdot 10^4$ & $1.55\cdot 10^5$ & $5.02\cdot 10^4$  & $4.59\cdot 10^4$  & $6.69\cdot 10^5$ \\
\multirow{2}{*}{5D-Uniform-10M}  & Ins & $4.83\cdot 10^3$ & $3.48\cdot 10^4$ & $4.08\cdot 10^4$ & $2.03\cdot 10^4$ & $2.69\cdot 10^4$ & $1.14\cdot 10^5$ & $5.00\cdot 10^4$  & $4.29\cdot 10^4$  & $5.60\cdot 10^5$ \\
                                 & Del  & $6.27\cdot 10^4$ & $6.16\cdot 10^4$ & $8.40\cdot 10^4$ & $2.98\cdot 10^4$ & $2.37\cdot 10^4$ & $2.07\cdot 10^5$ & $5.55\cdot 10^4$  & $4.46\cdot 10^4$  & $7.68\cdot 10^5$ \\
\multirow{2}{*}{7D-Uniform-10M}  & Ins & $4.38\cdot 10^3$ & $1.84\cdot 10^4$ & $3.67\cdot 10^4$ & $2.14\cdot 10^4$ & $3.01\cdot 10^4$ & $1.23\cdot 10^5$ & $4.37\cdot 10^4$  & $4.04\cdot 10^4$  & $4.65\cdot 10^5$ \\
                                 & Del  & $2.77\cdot 10^4$ & $2.48\cdot 10^4$ & $2.78\cdot 10^4$ & $4.78\cdot 10^4$ & $4.51\cdot 10^4$ & $3.73\cdot 10^5$ & $7.60\cdot 10^4$  & $7.26\cdot 10^4$  & $1.12\cdot 10^6$ \\
\multirow{2}{*}{2D-SS-varden-10M}   & Ins & $2.44\cdot 10^5$ & $2.25\cdot 10^5$ & $1.68\cdot 10^5$ & $3.99\cdot 10^5$ & $3.74\cdot 10^5$ & $6.17\cdot 10^5$ & $3.67\cdot 10^5$  & $3.53\cdot 10^5$  & $1.90\cdot 10^6$ \\
                                 & Del  & $9.88\cdot 10^4$ & $9.76\cdot 10^4$ & $1.02\cdot 10^5$ & $1.63\cdot 10^5$ & $1.69\cdot 10^5$ & $2.82\cdot 10^5$ & $1.83\cdot 10^5$  & $1.92\cdot 10^5$  & $1.21\cdot 10^6$ \\
\multirow{2}{*}{3D-SS-varden-10M}   & Ins & $1.33\cdot 10^5$ & $1.12\cdot 10^5$ & $9.55\cdot 10^4$ & $1.11\cdot 10^5$ & $9.56\cdot 10^4$ & $3.74\cdot 10^5$ & $1.05\cdot 10^5$  & $8.89\cdot 10^4$  & $1.14\cdot 10^6$ \\
                                 & Del  & $1.00\cdot 10^5$ & $9.20\cdot 10^4$ & $6.32\cdot 10^4$ & $8.40\cdot 10^4$ & $8.45\cdot 10^4$ & $4.10\cdot 10^5$ & $8.33\cdot 10^4$  & $8.20\cdot 10^4$  & $1.04\cdot 10^6$ \\
\multirow{2}{*}{5D-SS-varden-10M}   & Ins & $1.16\cdot 10^5$ & $7.51\cdot 10^4$ & $4.85\cdot 10^4$ & $1.25\cdot 10^5$ & $1.04\cdot 10^5$ & $2.42\cdot 10^5$ & $1.20\cdot 10^5$  & $1.03\cdot 10^5$  & $7.19\cdot 10^5$ \\
                                 & Del  & $1.84\cdot 10^5$ & $1.74\cdot 10^5$ & $1.05\cdot 10^5$ & $2.04\cdot 10^5$ & $1.94\cdot 10^5$ & $3.92\cdot 10^5$ & $2.05\cdot 10^5$  & $1.94\cdot 10^5$  & $1.15\cdot 10^6$ \\
\multirow{2}{*}{7D-SS-varden-10M}   & Ins & $1.01\cdot 10^5$ & $6.78\cdot 10^4$ & $4.61\cdot 10^4$ & $1.11\cdot 10^5$ & $8.63\cdot 10^4$ & $2.43\cdot 10^5$ & $1.09\cdot 10^5$  & $8.58\cdot 10^4$  & $5.87\cdot 10^5$ \\
                                 & Del  & $1.86\cdot 10^5$ & $1.77\cdot 10^5$ & $1.14\cdot 10^5$ & $2.12\cdot 10^5$ & $1.83\cdot 10^5$ & $4.36\cdot 10^5$ & $2.14\cdot 10^5$  & $1.82\cdot 10^5$  & $1.30\cdot 10^6$ \\
\multirow{2}{*}{7D-Household-2M} & Ins & --         & --         & $8.15\cdot 10^2$ & $7.94\cdot 10^2$ & $5.18\cdot 10^2$ & $8.94\cdot 10^3$ & $2.44\cdot 10^3$  & $1.34\cdot 10^3$  & $2.66\cdot 10^4$ \\
                                 & Del  & --         & --         & $3.66\cdot 10^3$ & $2.80\cdot 10^4$ & $1.36\cdot 10^4$ & $3.16\cdot 10^4$ & $1.42\cdot 10^3$  & $1.88\cdot 10^3$  & $4.19\cdot 10^4$ \\
\multirow{2}{*}{16D-Chem-4M}     & Ins & --         & --         & $4.20\cdot 10^2$ & $2.62\cdot 10^4$ & $2.86\cdot 10^4$ & $6.12\cdot 10^4$ & $4.88\cdot 10^4$  & $6.20\cdot 10^4$  & $5.74\cdot 10^5$ \\
                                 & Del  & --         & --         & $6.95\cdot 10^2$ & $7.00\cdot 10^4$ & $7.93\cdot 10^4$ & $3.15\cdot 10^5$ & $1.48\cdot 10^5$  & $1.69\cdot 10^5$  & $1.23\cdot 10^6$ \\
\multirow{2}{*}{3D-Cosmo-298M}   & Ins & --         & --         & $4.96\cdot 10^4$ & --         & --         & $6.36\cdot 10^4$ & --          & --          & $2.73\cdot 10^5$ \\
                                 & Del  & --         & --         & $2.62\cdot 10^4$ & --         & --         & $1.86\cdot 10^5$ & --          & --          & $2.94\cdot 10^5$ \\ \hline
Batch Sizes                      &           & \multicolumn{3}{c|}{$10^5$}         & \multicolumn{3}{c|}{$10^6$}         & \multicolumn{3}{c}{$\min\{n,10^7\}$} \\
                                 &           & Seq        & 1t         & 48h        & Seq        & 1t         & 48h        & Seq         & 1t          & 48h        \\ \hline
\multirow{2}{*}{2D-Uniform-10M}  & Ins & $2.30\cdot 10^5$ & $2.20\cdot 10^5$ & $3.55\cdot 10^6$ & $3.47\cdot 10^5$ & $3.34\cdot 10^5$ & $9.67\cdot 10^6$ & $4.13\cdot 10^5$  & $3.92\cdot 10^5$  & $1.30\cdot 10^7$ \\
                                 & Del  & $1.09\cdot 10^5$ & $1.14\cdot 10^5$ & $1.54\cdot 10^6$ & $1.48\cdot 10^5$ & $1.57\cdot 10^5$ & $5.24\cdot 10^6$ & $2.42\cdot 10^5$  & $2.55\cdot 10^5$  & $9.29\cdot 10^6$ \\
\multirow{2}{*}{3D-Uniform-10M}  & Ins & $8.09\cdot 10^4$ & $6.15\cdot 10^4$ & $1.70\cdot 10^6$ & $9.97\cdot 10^4$ & $7.53\cdot 10^4$ & $2.37\cdot 10^6$ & $1.04\cdot 10^5$  & $1.02\cdot 10^5$  & $3.88\cdot 10^6$ \\
                                 & Del  & $4.68\cdot 10^4$ & $4.19\cdot 10^4$ & $1.20\cdot 10^6$ & $6.60\cdot 10^4$ & $6.15\cdot 10^4$ & $2.04\cdot 10^6$ & $9.59\cdot 10^4$  & $9.56\cdot 10^4$  & $3.69\cdot 10^6$ \\
\multirow{2}{*}{5D-Uniform-10M}  & Ins & $7.21\cdot 10^4$ & $5.96\cdot 10^4$ & $1.62\cdot 10^6$ & $1.03\cdot 10^5$ & $8.85\cdot 10^4$ & $2.20\cdot 10^6$ & $9.82\cdot 10^4$  & $1.23\cdot 10^5$  & $2.04\cdot 10^6$ \\
                                 & Del  & $7.79\cdot 10^4$ & $6.34\cdot 10^4$ & $1.86\cdot 10^6$ & $1.20\cdot 10^5$ & $1.04\cdot 10^5$ & $2.99\cdot 10^6$ & $1.65\cdot 10^5$  & $1.82\cdot 10^5$  & $5.09\cdot 10^6$ \\
\multirow{2}{*}{7D-Uniform-10M}  & Ins & $6.71\cdot 10^4$ & $6.54\cdot 10^4$ & $1.56\cdot 10^6$ & $1.20\cdot 10^5$ & $1.42\cdot 10^5$ & $3.92\cdot 10^6$ & $1.00\cdot 10^5$  & $1.34\cdot 10^5$  & $4.21\cdot 10^6$ \\
                                 & Del  & $1.48\cdot 10^5$ & $1.52\cdot 10^5$ & $3.80\cdot 10^6$ & $2.91\cdot 10^5$ & $2.81\cdot 10^5$ & $6.84\cdot 10^6$ & $2.69\cdot 10^5$  & $3.11\cdot 10^5$  & $1.06\cdot 10^7$ \\
\multirow{2}{*}{2D-SS-varden-10M}   & Ins & $3.82\cdot 10^5$ & $3.83\cdot 10^5$ & $6.61\cdot 10^6$ & $4.39\cdot 10^5$ & $4.17\cdot 10^5$ & $1.21\cdot 10^7$ & $3.39\cdot 10^5$  & $3.70\cdot 10^5$  & $1.35\cdot 10^7$ \\
                                 & Del  & $1.66\cdot 10^5$ & $1.75\cdot 10^5$ & $3.55\cdot 10^6$ & $1.72\cdot 10^5$ & $1.81\cdot 10^5$ & $5.42\cdot 10^6$ & $2.43\cdot 10^5$  & $2.49\cdot 10^5$  & $9.36\cdot 10^6$ \\
\multirow{2}{*}{3D-SS-varden-10M}   & Ins & $9.66\cdot 10^4$ & $7.55\cdot 10^4$ & $2.25\cdot 10^6$ & $1.03\cdot 10^5$ & $7.69\cdot 10^4$ & $2.18\cdot 10^6$ & $1.00\cdot 10^5$  & $1.03\cdot 10^5$  & $4.07\cdot 10^6$ \\
                                 & Del  & $7.61\cdot 10^4$ & $6.93\cdot 10^4$ & $2.01\cdot 10^6$ & $7.56\cdot 10^4$ & $7.03\cdot 10^4$ & $2.15\cdot 10^6$ & $9.38\cdot 10^4$  & $9.80\cdot 10^4$  & $3.75\cdot 10^6$ \\
\multirow{2}{*}{5D-SS-varden-10M}   & Ins & $1.11\cdot 10^5$ & $1.00\cdot 10^5$ & $1.57\cdot 10^6$ & $9.12\cdot 10^4$ & $9.97\cdot 10^4$ & $2.10\cdot 10^6$ & $1.04\cdot 10^5$  & $1.32\cdot 10^5$  & $1.64\cdot 10^6$ \\
                                 & Del  & $1.91\cdot 10^5$ & $1.95\cdot 10^5$ & $3.21\cdot 10^6$ & $1.64\cdot 10^5$ & $1.89\cdot 10^5$ & $4.34\cdot 10^6$ & $1.58\cdot 10^5$  & $1.81\cdot 10^5$  & $3.95\cdot 10^6$ \\
\multirow{2}{*}{7D-SS-varden-10M}   & Ins & $9.40\cdot 10^4$ & $8.28\cdot 10^4$ & $1.38\cdot 10^6$ & $8.50\cdot 10^4$ & $8.04\cdot 10^4$ & $1.44\cdot 10^6$ & $8.70\cdot 10^4$  & $1.03\cdot 10^5$  & $1.91\cdot 10^6$ \\
                                 & Del  & $1.89\cdot 10^5$ & $1.76\cdot 10^5$ & $3.74\cdot 10^6$ & $1.62\cdot 10^5$ & $1.68\cdot 10^5$ & $4.32\cdot 10^6$ & $1.44\cdot 10^5$  & $1.59\cdot 10^5$  & $4.24\cdot 10^6$ \\
\multirow{2}{*}{7D-Household-2M} & Ins & $6.31\cdot 10^3$ & $5.72\cdot 10^3$ & $9.04\cdot 10^4$ & $5.98\cdot 10^3$ & $9.25\cdot 10^3$ & $1.80\cdot 10^5$ & $2.09\cdot 10^4$  & $3.72\cdot 10^4$  & $5.43\cdot 10^5$ \\
                                 & Del  & $4.43\cdot 10^3$ & $1.13\cdot 10^4$ & $2.08\cdot 10^5$ & $9.21\cdot 10^3$ & $3.92\cdot 10^4$ & $6.35\cdot 10^5$ & $5.63\cdot 10^4$  & $9.32\cdot 10^4$  & $1.92\cdot 10^6$ \\
\multirow{2}{*}{16D-Chem-4M}     & Ins & $4.51\cdot 10^4$ & $5.95\cdot 10^4$ & $9.19\cdot 10^5$ & $4.34\cdot 10^4$ & $6.49\cdot 10^4$ & $8.73\cdot 10^5$ & $4.10\cdot 10^4$  & $6.38\cdot 10^4$  & $7.89\cdot 10^5$ \\
                                 & Del  & $1.35\cdot 10^5$ & $1.61\cdot 10^5$ & $1.27\cdot 10^6$ & $1.30\cdot 10^5$ & $1.76\cdot 10^5$ & $2.78\cdot 10^6$ & $1.21\cdot 10^5$  & $1.75\cdot 10^5$  & $4.52\cdot 10^6$ \\
\multirow{2}{*}{3D-Cosmo-298M}   & Ins & --         & --         & $2.40\cdot 10^5$ & --         & --         & $6.20\cdot 10^5$ & --          & --          & $8.95\cdot 10^5$ \\
                                 & Del  & --         & --         & $3.19\cdot 10^5$ & --         & --         & $6.61\cdot 10^5$ & --          & --          & $9.18\cdot 10^5$
\end{tabular}
\caption{Throughput (number of points processed per second) of the batch-dynamic algorithm with varying batch sizes. "Seq" denotes the sequential implementation, "1t" denotes the parallel implementation run on 1 thread, and "48h" denotes the parallel implementation run on 48 cores with two-way hyper-threading. "Ins" and "Del" denote the throughput for batch insertion and batch deletion, respectively.
}\label{table:throughput}
\end{table}

\section{Related Work}\label{sec:related}

\myparagraph{Static Closest Pair}
The problem of finding the closest pair of points under the $L_t$ metric has been a long-studied problem in computational geometry.
There have been several deterministic sequential algorithms~\cite{shamos1975closest, bentley1976divide, bentley1980multidimensional, hinrichs1988plane} that solve the problem optimally in $O(n\log n)$ time under the standard algebraic decision tree model.
Under a different model where the floor function is allowed with unit-cost, Rabin~\cite{rabin1976probabilistic} solved the problem in $O(n)$ expected time. 
Fortune and Hopcroft~\cite{fortune1979note} presented a deterministic algorithm with $O(n\log \log n)$ running time under the same model. 
Later, Khuller and Matias~\cite{khuller1995simple} came up with a simple randomized algorithm that takes $O(n)$ expected time using a sieve data structure. 
Golin et al.~\cite{golin1992simple} described a randomized incremental algorithm for the problem that takes $O(n)$ expected time. 

Later, Dietzfelbinger et al.~\cite{dietzfelbinger1997reliable} filled in some details for Rabin's algorithm concerning hashing and duplicate grouping. Banyassady and Mulzer~\cite{banyassady2017simple} gave a simpler analysis for Rabin's algorithm. 
Chan~\cite{chan1999geometric} gave an algorithm that takes $O(n)$ expected time in a randomized optimization framework. Maheshwari et al.~\cite{maheshwari2020simple} designed a new algorithm for closest pair in the doubling-metric space in expected $O(n\log n)$ time.

For parallel algorithms, Atallah and Goodrich~\cite{atallah1986efficient} came up with the first parallel algorithm for closest pair using multi-way divide-and-conquer. The algorithm takes $O(n\log n\log\log n)$ work and $O(\log n \log\log n)$ depth. 
MacKenzie and Stout~\cite{mackenzie1998ultrafast} designed a parallel algorithm inspired by Rabin~\cite{rabin1976probabilistic} that takes $O(n)$ work and $O(1)$ depth in expectation. 
Blelloch and Maggs~\cite{blelloch2010parallel} parallelized the divide-and-conquer approach in~\cite{bentley1976divide,bentley1980multidimensional}, and their algorithm takes $O(n\log n)$ work and $O(\log^2 n)$ depth. 
Recently, Blelloch et al.~\cite{BGSS16} designed a randomized incremental algorithm for the problem that takes $O(n)$ expected work and $O(\log n\log^* n)$ depth \whp. 
Lenhof and Smid~\cite{lenhof1995sequential} solved a close variant, the $K$-closest pair problem for the $L_t$-metric, in $O(n\log n \log\log n +K)$ work and $O(\log^2 n \log\log n)$ depth, where $K$ is the number of closest pairs to return. 

\myparagraph{Dynamic Closest Pair} 
For the sequential dynamic closest pair problem, there have been semi-dynamic algorithms that focus on only insertions or only deletions.
For the deletion-only case,
Supowit~\cite{supowit1990new} gave an algorithm that maintains the minimal Euclidean distance for points in $k$ dimensions in $O(\log^k n)$ amortized time per deletion. The method uses $O(n\log^{k-1} n)$ space. 
For the insertion-only case, Schwarz and Smid~\cite{schwarz1991n} designed a data structure for the $L_t$-metric space that takes $O(n)$ space and handles each insertion in $O(\log n\log\log n)$ time. 
Schwarz et al.~\cite{schwarz1994optimal} designed an optimal data structure taking $O(n)$ space and $O(\log n)$ time per insertion for the same problem. 

For sequential fully-dynamic closest pair algorithms supporting both insertions and deletions,
Overmars~\cite{overmars1981dynamization} gave an $O(n)$ time update algorithm.
Smid~\cite{smid1990maintaining} gave a dynamic data structure of size $O(n)$, that maintains closest pair of $k$-dimensional points in the $L_t$-metric space, in $O(n^{2/3} \log n)$ time per update. 

Later work improved the sequential running time to polylogarithmic time per update.
Smid~\cite{smid1991maintaining} used a data structure of size $O(n\log^k n)$ that maintains the closest pair in the $L_t$-metric in $O(\log^k n \log \log n)$ amortized time per update. 
Callahan and Kosaraju~\cite{callahan1995algorithms} presented a general technique for dynamizing problems in Euclidean-space that make use of the well-separated pair decomposition~\cite{callahan1995decomposition}. 
For dynamic closest pair, their algorithm requires $O(n)$ space and $O(\log^2 n)$ time per update.
Kapoor and Smid~\cite{kapoor1996new} proposed new dynamic data structures that solve the problem in the $L_t$-metric using $O(n)$ space and deterministic $O(\log^{k-1}n \log\log n)$ update time. 
Bespamyatnikh~\cite{bespamyatnikh1998optimal} described a closest pair data structure for the $L_t$-metric space that takes $O(n)$ space, and has $O(\log n)$ deterministic update time. The main idea is to dynamically maintain a fair-split tree and a heap of neighbor pairs. The algorithm does not currently seem to be practical. 
Golin et al.~\cite{golin1998randomized} described a randomized data structure for the problem in the $L_t$-metric space, which takes $O(n)$ expected space and supports  insertions and deletions in $O(\log n)$ expected time. 
Our work is inspired by Golin et al.~\cite{golin1998randomized}, and is the first to parallelize dynamic batch-updates for the closest pair problem.

In addition, there has been related work on similar problems. Agarwal et al.~\cite{agarwal2008kinetic} proposed a kinetic and dynamic data structure for 2-dimensional
all-nearest-neighbors as well as the closest pair, which uses $O(n\log n)$ space and processes each update in $O(\log^3 n)$ time. 
Eppstein~\cite{eppstein2000fast} solved a stronger version of the dynamic closest pair problem by supporting arbitrary distance functions. The algorithm maintains the closest pair in $O(n \log n)$ time per insertion and $O(n \log^2 n)$ amortized time per deletion using $O(n)$ space.
Cardinal and Eppstein~\cite{cardinal2004lazy} later designed a more practical version of this algorithm.
Chan~\cite{chan2020dynamic} presented a modification of Eppstein's algorithm~\cite{eppstein2000fast}, which improves the amortized update time to $O(n)$.

}{}

\end{document}